\theoremstyle{thmstyleone}%
\newtheorem{theorem}{Theorem}
\theoremstyle{thmstyletwo}%
\newtheorem{example}{Example}%
\theoremstyle{thmstylethree}%
\newtheorem{definition}{Definition}%
\definecolor{subspace}{RGB}{125, 5, 10}
\definecolor{freeze}{RGB}{25, 30, 100}
\definecolor{fsr}{RGB}{120, 3, 50}
\definecolor{reset}{RGB}{80, 10, 170}
\begin{document}

\title[The Ising model and random fields of scales]{The Ising model and random fields of scales}


\author{\fnm{Ricardo} \sur{G\'omez A\'iza}}\email{rgomez@im.unam.mx}

\affil{\orgdiv{Instituto de Matem\'aticas}, \orgname{Universidad Nacional Aut\'onoma de M\'exico}, \orgaddress{\street{Circuito Exterior, Ciudad Universitaria}, \city{Ciudad de M\'exico}, \postcode{04510}, \state{CDMX}, \country{M\'exico}}}


\abstract{Random fields of scales result when the class of musical scales is thought as a set of sites,
and a site can be in one of two possible states or ``spins'': \textsf{On} or \textsf{Off}. We
present a flexible simulated annealing model that produces generic
configurations arising from equilibrium states (or Gibbs measures)
associated to hamiltonian energy functions defined in terms of musical
interactions with parameters that can be manipulated
to customize properties of the scales. The starting point is to think of the set of scales
as the combinatorial class of integer compositions and the final result is an effective
thermodynamic search engine implemented in an open access application
for the 12-TET tuning system: \emph{Scaletor}.}

\keywords{musical scales; integer compositions; Boltzmann machines; Montecarlo Markov chain; Glauber dynamics}


\pacs[MSC Classification]{37B10; 05A15; 00A65; }

\maketitle


\section{Introduction}\label{sec:intro}

We present a mathematical model implemented in a software application
designed to classify subsets of musical scales within the $n$-TET tuning system.
In this framework, a scale is viewed as a site, and each site can be in either one of
two possible states or ``spins'': \textsf{On} or \textsf{Off}. The model is inspired by
thermodynamic formalism, particularly the Ising model, one of the most well-known
statistical mechanics models for interacting particle systems.
We adapt its foundational principles for the scenario is that of a finite set of sites 
connected by several types of networks that we use to define interactions 
in terms of ``musical energies''. These interactions give rise to
Hamiltonian functions, from which equilibrium states
--formally described as Gibbs measures--
emerge and are realized as Boltzmann distributions.

Under this premise, given a set of parameters provided by the user,
we can simulate the corresponding generic configurations
with algorithms like Glauber dynamics and Metropolis-Hastings for the Ising model,
and then the temperatures can be
manipulated, as in simulated annealing processes,
to cross phase transitions and observe emerging order realized as
configurations of scales produced by measures of maximal pressure in the
different phases of the given parameterizations.
The Gibbs measures can be Dirac probability measures
supported on single configurations of scales with specific properties and
our simulations can produce them accurately if properly calibrated.
The properties can be combinatorial, e.g. can deal with the type
and number of intervals that form the pitch classes of the scales,
or they can deal with the modes of the scales, or they may take
into account interpolation,
or they can be of geometric type, e.g. can deal with the location of
the center of balance of the scales (see \cite{Carey17,MilneBulgerHerff17}), etc.
The musical interactions and the simulated annealing processes
were designed to be controlled in an ``audio mixer'' type of component
in order to gain intuition when doing manipulations.

The application, henceforth called \emph{Scaletor},
was initially experimentally developed in \emph{Processing} (see \cite{WebP}),
a platform with the capability of producing standalone
java applications for several operating systems.
So as a complementary material accompanying this work,
\medskip

\centerline{\emph{Scaletor is an open access application.}\footnote{See section \ref{sec:additional} for more information about additional material and download instructions (in particular, systems capable of running Processing's standalone applications are required).}}
\medskip

\noindent
We hope that it will
serve both the interested readers to verify the claims in this work
and the musicians and music theorists for it is in fact a complete
catalog of all the 2048 musical scales
(each scale has been named, mainly following the 
nomenclature in \cite{UESMDecks20}). The 
thermodynamic search engine in \emph{Scaletor}
is elaborated but ultimately effective and the
eventual experienced user can find it handy.
A screenshot of the Graphic User Interface (GUI) 
is shown in Figure \ref{fig:scaletor} (this figure contains a rather large caption with concepts
that later in the text will be described in detail).

\begin{figure}[h] 
   \centering
   \includegraphics[width=4.65in]{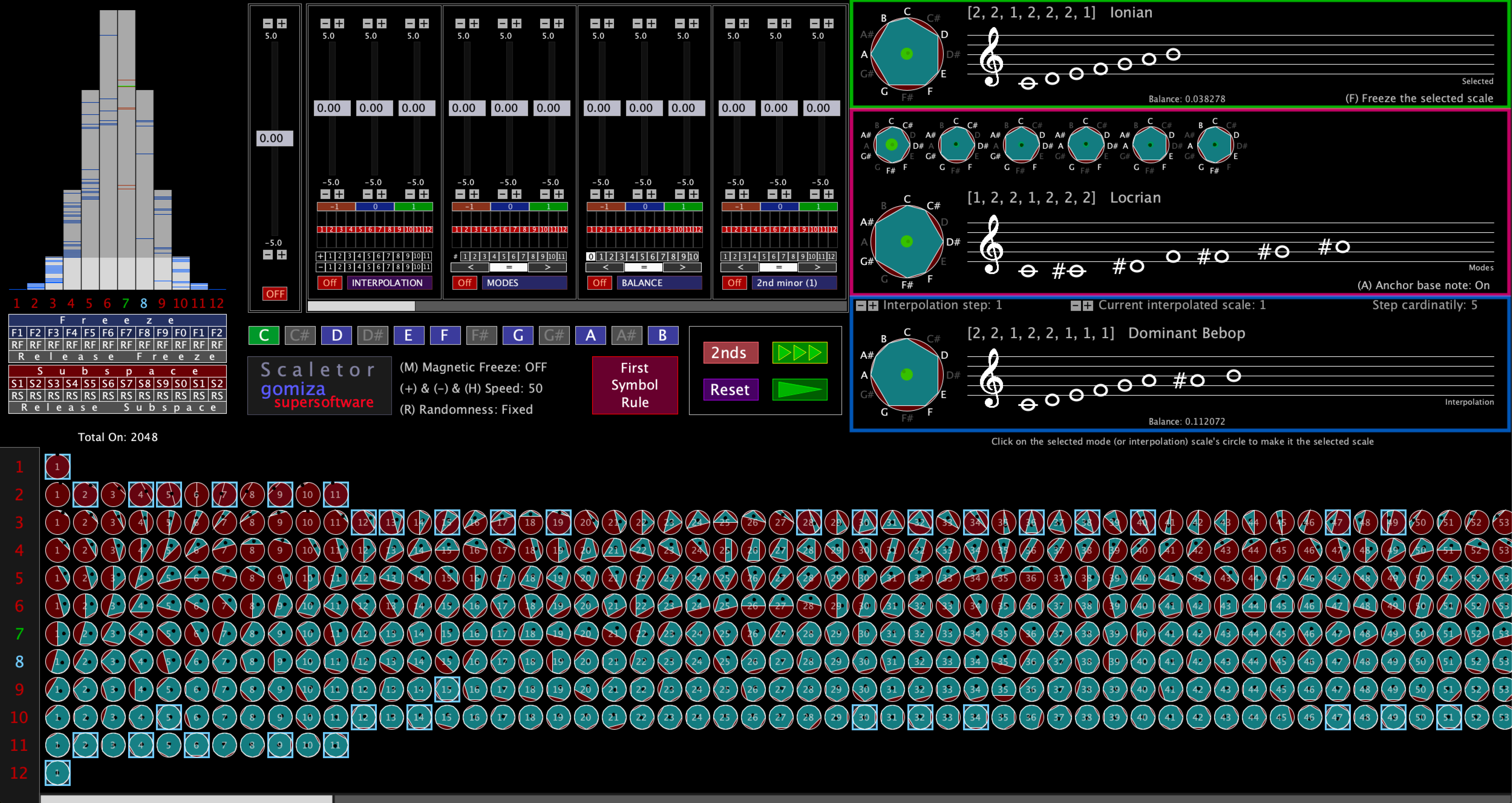}
   \caption{A screenshot of the GUI of \emph{Scaletor} on startup.
   The bell of scales is shown at the upper left corner.
   Each horizontal bar in it
   represents a scale in the selected tonality (in this case it is 
   \textsf{C}, in the green box near the bottom-right
   part of the bell of scales).
   The colors in the
   bell of scales code information about the status of several things in \emph{Scaletor} at
   each instant of time, and some of this information
   is shown along other parts of the GUI. For example,
   all the scales are in the \textsf{On} state because there are no black bars in the bell of scales.
   Also, the selected scale is the Ionian scale $s=(2,2,1,2,2,2,1)$ and it corresponds to the
   green bar in the 7th column in the bell of scales (zoom in), it is also shown in detail on the upper-right
   part of the GUI. The selected Ionian scale has 6 other modes, they correspond to the pink bars
   in the bell of scales, all of them are also in the 7th column of course,
   the first of them is shown in detail on the right part of the GUI, under the selected scale, it is
   the Locrian mode $(1,2,2,1,2,2,2)$. The Dominant Bebop scale $(2,2,1,2,2,1,1,1)$
   shown in detail under the Locrian mode  is the first blue bar in the 8th column
   of the bell of scales, there are 5 blue bars in this column and they correspond
   to the neighborhood of the selected scale in the $1^+$-step interpolation
   network $\mathfrak I^{1+}$, i.e there are 5 scales in \textsf{C} of length 8
   that result from the Ionian scale by adding a new pitch class.
   All the blue bars in the bell of scales correspond
   to the union of all the neighborhoods of the selected scale in the
   $k^+$-step and $k^-$-step interpolation networks $\mathfrak I^{k+}$
   and $\mathfrak I^{k-}$ resp., for all $k\geq 1$, i.e.
   all the blue bars correspond to the scales that result from the Ionian scale
   by either adding new pitch classes,
   or removing pitch classes other than the tonality \textsf{C}.
   The bottom-half part of the GUI represents a horizontal zoom of the
   bright part at the bottom of the bell of scales: here only the scales in the \textsf{On} state
   are shown in more detail as polygons, also with color codes as above
   (observe that at the very bottom there is a scroll-bar to navigate).
   The parameters that control the thermodynamic search engine are contained
   in the component that looks like and audio mixer.
   The application is silent: no scale is being played.
   There are no restricted nor frozen sites
   and the configuration is not changing because the MCMC random process is not running;
   the MCMC has the master volume at zero, and in fact it is muted (Off),
   hence the random field of scales that would occur if the MCMC were running
   would consist of independent \textsf{Bernoulli(1/2)} random variables (pure random noise). Etc.
   }
   \label{fig:scaletor}
\end{figure}

The rest of the paper is organized as follows.
In section \ref{sec:background} we give a very brief and basic background
that motivates our setup: we quickly recall the Ising model, first by
addressing spaces of configurations,
hamiltonian energy functions and Gibbs measures,
interactions, and finish with the Glauber dynamics and Metropolis-Hastings
simulations of generic configurations.
All this section is mainly intended to motivate the model
in \emph{Scaletor}, but it also serves to present the basics we need
to readers that may not be familiar with this type of material
(for deep and comprehensive references see
\cite{Georgii88, Ruelle04}). So, a reader familiar with thermodynamic formalism
can easily go directly to the main section \ref{sec:scaletor} where we present in detail
how we adapt the ideas behind the Ising 
model to the scenario of random fields of scales. In
section~\ref{sec:examples} we show examples of outputs
produced by the application to exhibit its performance.
Section \ref{sec:conclusion} contains
conclusions and related works (this type of models could be applied to
artificial intelligence systems specialized in music).
The last section \ref{sec:additional} contains information about 
\emph{Scaletor}'s repository at GitHub and a tutorial video.
\bigskip

\section{Background}
\label{sec:background}

\subsection{Boltzmann machines, hamiltonian energies and equilibrium states}

The type of model we develop here carries similarities with what are known
as Boltzmann machines, the Sherrington-Kirkpatrick model, Hopfield networks,
Potts model, Edwards-Anderson's model, the Ising model itself, etc.
\cite{Sherrington75, Edwards75, Chatterjee23}.
There is a finite set of \emph{sites} $\mathcal S$ (in our case it will be the set of scales in a given tonality, in the 12-TET tuning system)
and hence there is a (finite) \emph{spin configuration space}\footnote{The terminology comes from atomic spin models: in ferromagnetic materials, neighboring atoms tend to have aligned spins, either \emph{positive} or \emph{negative}, represented by the symbols $ \textsf{+1}$ and $ \textsf{-1}$, respectively
(if instead the material is \emph{anti}ferromagnetic, then the spins tend to be opposite). In our context, the symbols $ \textsf{+1}$ and $ \textsf{-1}$ will
represent the states \textsf{On} and \textsf{Off} that a given scale can be at a certain instant of time, respectively.} 
\begin{align}
	\Omega \subseteq \{\textsf{+1}, \textsf{-1}\}^{\mathcal S} \triangleq
	\left\{ \omega \colon \mathcal S \to \{ \textsf{+1}, \textsf{-1} \}\right\}
\end{align}
(as usual, for every $x\in \mathcal S$, we let $\omega_x \triangleq \omega(x)$).
Also, for each configuration $\omega \in \Omega$ there is the idea of cost, or weight,
captured by a hamiltonian function $E\colon \Omega \to \mathbb R$ so that
$E(\omega)$ is thought as the \emph{energy} required to realize $\omega$.
We look for typical configurations coming out of measures of maximal pressure,
that is, the generic configurations for the \emph{equilibrium states}
(or \emph{Gibbs measures}\footnote{For the equivalence between equilibrium states and Gibbs measures, see e.g. \cite{Georgii88, Ruelle04, BGMT20,BGMMT21}. In our context these two concepts are equivalent and we will content ourselves with the definition of equilibrium state and call it Gibbs measure indistinctively.}):
the probability distributions that solve the optimization problem
\begin{align}
\label{eq:max}
	\max \{\Psi(p) \: \rvert \: p \colon \Omega \to [0,1] \hbox{ is a probability function}\}\smallskip
\end{align}
where
\begin{align}
	H(p) \triangleq - \sum_{\omega \in \Omega} p(\omega)\log p(\omega)
\end{align}
is the \emph{entropy} of the probability distribution $p$
and
\begin{align}
	\Psi (p) \triangleq H(p) - \beta \int_{\Omega} E dp
\end{align}
is the corresponding \emph{pressure} at
the (\emph{inverse})
\emph{temperature}\footnote{In the thermodynamic theory of lattice gasses,
$\beta = 1 /k_BT \in \mathbb R$ where $T$ is the temperature
and $k_B = 1.380649 \times 10^{-23}$
is the Boltzmann constant, and here, as it is customary,
all these constants will be absorbed by the parameter $\beta$.} $\beta$.
The finitary variational principle asserts that there is
a unique equilibrium state, that is, there is a unique
probability measure that solves the optimization problem
\eqref{eq:max}, it is precisely
given by the Boltzmann distribution $p \triangleq \mu$ defined by
\begin{equation}
\label{eq:boltzman}
	\mu (\omega) = \displaystyle\frac{e^{-\beta E(\omega)}}{Z}
	\quad \forall \omega \in \Omega
\end{equation}
where $Z \triangleq \sum\nolimits_{\omega \in \Omega} e^{-\beta E(\omega)}$ is the
normalizing constant known as the \emph{partition function} (the proof follows from
Jensen's inequality).

\subsection{Interactions}

The hamiltonian energy function $E$ returns
the total cost of energy required to realize a given configuration $\omega \in \Omega$,
and such amount of energy is thought as the result of adding
all the energies arising from the interactions that occur between
the constituents of $\omega$. Formally,
an \emph{interaction} is a family
$\Phi=\{\Phi_A\}_{A\Subset \mathcal S}$
of \emph{local energy functions}
$\Phi_A\colon\Omega\to \mathbb R$ indexed by all the finite\footnote{In this work, $\mathcal S$ is always finite, hence any subset of it is also finite, nevertheless we stress the fact that $A$ is required to be finite (this is the meaning of the symbol $\Subset$) because the main theory is for contexts when $\mathcal S$ is infinite (when the so called ``thermodynamic limit'' comes into play).} subsets, or \emph{regions}, $A\Subset \mathcal S$.
This means that $\Phi_A$ solely depends on the configurations restricted to $A$, or to be precise,
$\Phi_A(\omega) = \Phi_A(\omega')$ whenever $\omega,\omega'\in\Omega$ are such that
$\omega \rvert_A=\omega' \rvert_A$.
The interaction $\Phi$ induces a hamiltonian energy function
$E=E_{\Phi}$ which is the sum of all the local
interaction energies, namely
\begin{equation}
\label{eq:interhamilton}
	E_{\Phi}(\omega)\triangleq \sum\limits_{A\Subset \mathcal S}\Phi_A(\omega)
	\quad \forall \omega \in \Omega.
\end{equation}
To construct custom interactions it is appropriate to
start with interactions that act on simple regions of the state space,
and then consider operations like addition, multiplication and
composition with real functions.
Let $\Pi(\Omega)$ be the set of interactions on $\Omega$.
For every
$\Phi = \{ \Phi_A \colon \Omega \to \mathbb R\}_{A \Subset \mathcal S} \in \Pi(\Omega)$ and
$\Psi = \{ \Psi_A \colon \Omega \to \mathbb R\}_{A \Subset \mathcal S} \in \Pi(\Omega)$,
let
$\Phi + \Psi \in \Pi(\Omega)$ and
$\Phi \cdot \Psi \in \Pi(\Omega)$ 
be defined by
$(\Phi+\Psi)_A(\omega) = \Phi_A(\omega)+\Psi_A(\omega)$
and
$(\Phi\cdot\Psi)_A(\omega) = \Phi_A(\omega)\cdot\Psi_A(\omega)$
for all $A\Subset \mathcal S$ and $\omega\in\Omega$.
Also, given any function $f\colon \mathbb R \to \mathbb R$
(e.g. $f(x)=ax+b$ with $a,b \in \mathbb R$),
let $f(\Phi)\in\Pi(\Omega)$ be defined by
$f(\Phi)_A(\omega) = f\big(\Phi_A(\omega)\big)$ for all
$A\Subset \mathcal S$ and $\omega\in\Omega$.
The addition of two interactions $\Phi$ and $\Psi$
is \emph{disjoint} if $\Phi_A=0$ whenever $\Psi_A\neq 0$
and viceversa. The interaction $\Phi$ is
\emph{single site} if $\Phi_A=0$ whenever $A$ is not a singleton,
i.e. $\Phi_A=0$ if $A\neq \{x\}$ for some $x\in\mathcal S$.
Single site interactions produce independent random fields.

Now the goal is to simulate generic configurations in $\Omega$ for the
Gibbs measures $\mu = \mu_{\Phi}$ associated to hamiltonian energy functions $E_\Phi$ arising
from interactions $\Phi \in\Pi(\Omega)$. This is possible by means of algorithms like Glauber dynamics or Metropolis-Hastings for the Ising model. So let us first briefly present the Ising model
as example.

\subsection{The Ising model}

In the classical Ising model of size $n\geq 1$
in dimension $d \geq 1$, the set of sites is the $d$-dimensional square grid of integers modulo $n$,
that is,
\begin{align}
	\mathcal S = \mathbb Z_n^d \triangleq \underset{d\textnormal{ times}}{\underbrace{\mathbb Z_n \times \cdots \times \mathbb Z_n}}
\end{align}
where $\mathbb Z_n = \mathbb Z / n\mathbb Z$ is the additive group of integers modulo $n$.
Also the spin configuration space is the complete set of configurations
\begin{align}
\label{eq:initialIsing}
	\Omega = \{\textsf{+1},\textsf{-1}\}^{\mathbb Z_n^d}.
\end{align}
The symbols $\textsf{+1}$ and $\textsf{-1}$
that represent the states in which a site can be are thought as the actual integer
values $+1$ and $-1$, respectively.
There is the notion of adjacency in $\mathbb Z^d_n$
defined by the rule that makes two elements $x,y\in \mathbb Z^d_n$ adjacent if and only
if there are representatives of $x$ and $y$ in $\mathbb Z^d$ that are at euclidian distance one.
In other words, the \emph{neighborhood} of a site $x \in \mathbb Z_n^d$ is
$N(x) \triangleq \{ x + \mathbf e  : \mathbf e \in N(\mathbf 0) \}$,
where
$N(\mathbf 0) \triangleq \{ \mathbf e \in \{-1,0,1\}^d : \| \mathbf e \| = 1 \}$
($\|\cdot\|$ denotes euclidian distance).
The well known hamiltonian
energy function for the two-dimensional Ising model ($d=2$) takes into account this adjacency
and is defined for every $\omega \in \Omega$ by
\begin{align}
\label{eq:IsingHamiltonian1}
	E(\omega) = & -\sum\limits_{x \in \mathcal S} \sum\limits_{y \in N(x)}
	J(x,y)\omega_x \omega_y \\ \label{eq:IsingHamiltonian2}
	& - h \sum\limits_{x \in \mathcal S} \omega_x 
\end{align}
where $J(x,y)=J(y,x) =J \in \mathbb R$ is a (uniform) symmetric assignment
of ``weights'' (real numbers) on the edges of the $\mathbb Z_n^d$
grid\footnote{This assignment of weights
can be coded in a square $n^d \times n^d$ symmetric matrix that has nonzero
entries only where the adjacency matrix of the $\mathbb Z_n^d$ grid has nonzero entries.},
they represent the interaction energy of the states $\omega_x$ and $\omega_y$ of
neighboring particles at sites $x,y\in\mathcal S$ (so that
if $J > 0$, then the neighboring particles tend to have their spins aligned and the system
is \emph{ferromagnetic}, otherwise, if $J < 0$, then neighboring particles tend to have
opposite spins and the system is \emph{antiferromagnetic}), and $h \in \mathbb R$ is to be interpreted as an \emph{external magnetic field}\footnote{In the context of lattice gases, the particles attract or repeal each other, and
the symbols $\textsf{+1}$ and $\textsf{-1}$ are thought as the integer values $+1$ and $0$,
respectively, and represent a particle that occupies or not a site, respectively, and $h$ is thought as a \emph{chemical potential}.}.

One way to define an interaction $\Phi\in\Pi(\Omega)$ that gives rise to $E$ as given in equations
\eqref{eq:IsingHamiltonian1} and \eqref{eq:IsingHamiltonian2} is by
first to declare that $\Phi_A = 0$ whenever
$A$ is not a single site nor a site together with its neighborhood, and then let
\begin{align}
\label{eq:IsingInter1}	
	\Phi_{\{x\}\cup N(x)}(\omega) 
	& = -J\omega_x\sigma_x(\omega) \\
\label{eq:IsingInter2}
	\Phi_{\{x\}}(\omega) & = - h \omega_x 
\end{align}
for every $x \in \mathcal S$,
where $\sigma_x(\omega) \triangleq \sum\limits_{y\in N(x)}\omega_y$. Thus, from equation \eqref{eq:interhamilton},
we readily get $E_{\Phi} = E$.

Above, the family of functions
$\Phi^{(\textrm{I})}=\{ \Phi_{\{x\}} \colon \Omega \to \mathbb R \}_{x\in \mathcal S}$
defined by equation \eqref{eq:IsingInter2} is an instance of a single site
interaction\footnote{When we say that a family $\Phi$ of functions
$\Phi_A\colon\Omega \to \mathbb R$
is an interaction but such a family is incomplete in the sense that there are regions
$A\Subset \mathcal S$ such that $\Phi$ possesses no functions $\Phi_A$ that have been explicitly defined, we
are thinking in the completion obtained by adding the functions $\Phi_A=0$ in such cases.},
it is quite simple what each of these functions do, e.g. if $h=1$, then
$\Phi_{\{x\}} (\omega)$ is the opposite state of $\omega_x$.
On the other hand, consider the family
$\Phi^{(\textrm{II})}=\{ \Phi_{\{ x\} \cup N(x)}:\Omega \to \mathbb R \}_{x\in \mathcal S}$
of functions defined by \eqref{eq:IsingInter1}, it is not a single site interaction
and so the states in configurations are generally dependent of each other
(unless $J=0$ of course). Observe that $\Phi$ is the (disjoint) sum of both $\Phi^{(\text{I})}$
and $\Phi^{(\text{II})}$. Figure \ref{fig:isingInteraction}
shows all the posible values of $\Phi_A$ for
all the possible configurations on regions $A= \{x\}\cup N(x)$ with $x\in\mathcal S$
(with $J=1$).

\begin{figure}[h] 
   \centering
   \includegraphics[width=4.7in]{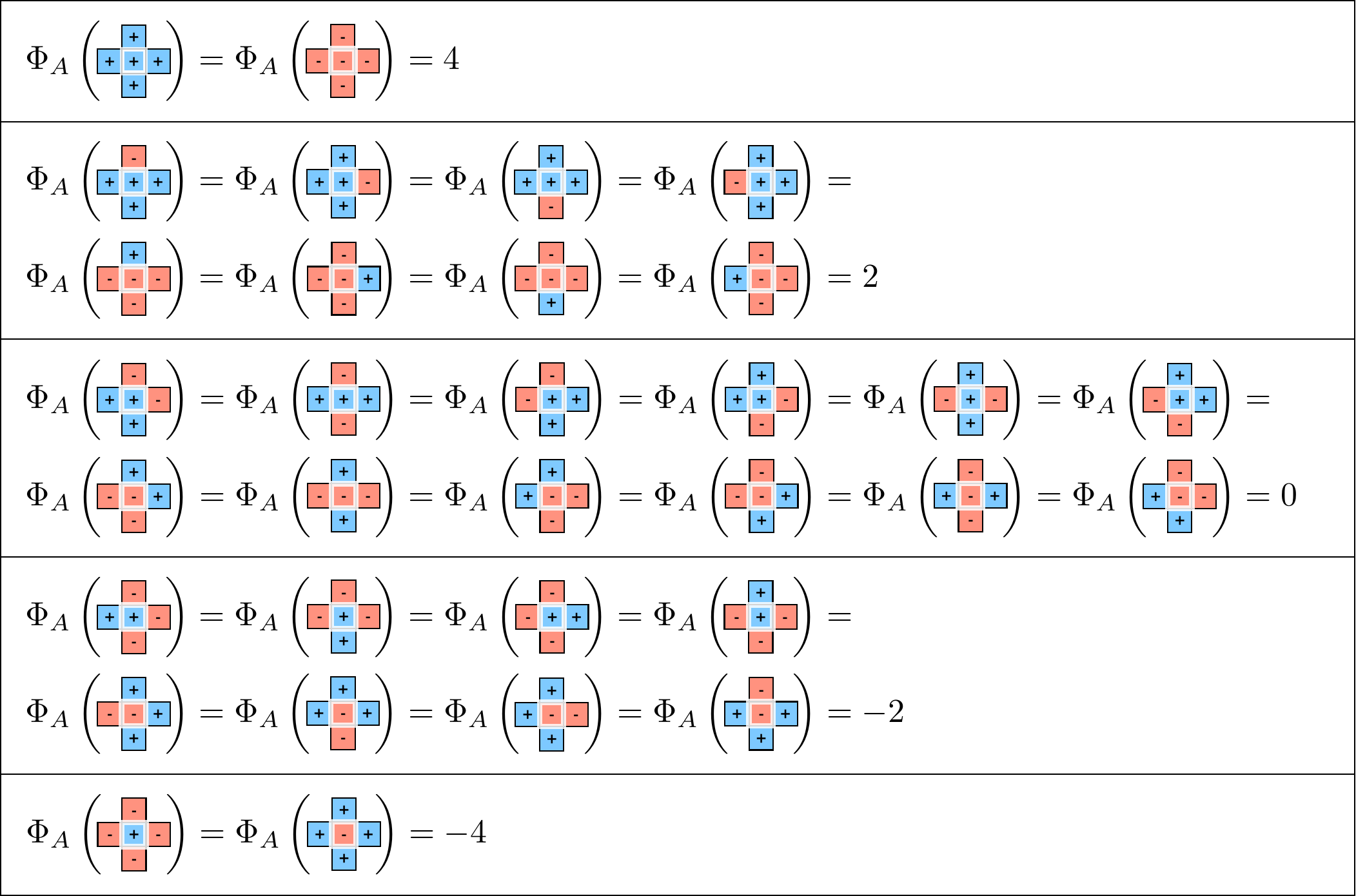} 
   \caption{The values of $\Phi_A$ in the Ising model
   over all configurations on regions $A= \{x\}\cup N(x)$
   formed by a site $x\in\mathcal S$ and its neighborhood
   (here we have taken $J=1$, see equation \eqref{eq:IsingInter1}).}
   \label{fig:isingInteraction}
\end{figure}

\subsection{Glauber dynamics and Metropolis-Hastings simulations}
\label{subsec:GlauberMCMC}
Glauber dynamics is a Monte Carlo Markov Chain (MCMC)
simulation that produces generic
configurations of the Gibbs measures in the classical Ising model on
$\mathcal S = \mathbb Z_n^2$, with the interaction $\Phi$ consisting
of local energy functions as defined above in equations
\eqref{eq:IsingInter1} and \eqref{eq:IsingInter2}.
It starts from an arbitrary random configuration
$\omega \in \Omega$ (e.g. generated from a
random field indexed by $\mathcal S$ and
formed by independent \textsf{Bernoulli}$(1/2)$ distributions
on each site $x \in \mathcal S$). Then the simulation process is as follows
(see Figure \ref{fig:isingGlauber}):
\begin{figure}[h] 
   \centering
   \includegraphics[width=4.7in]{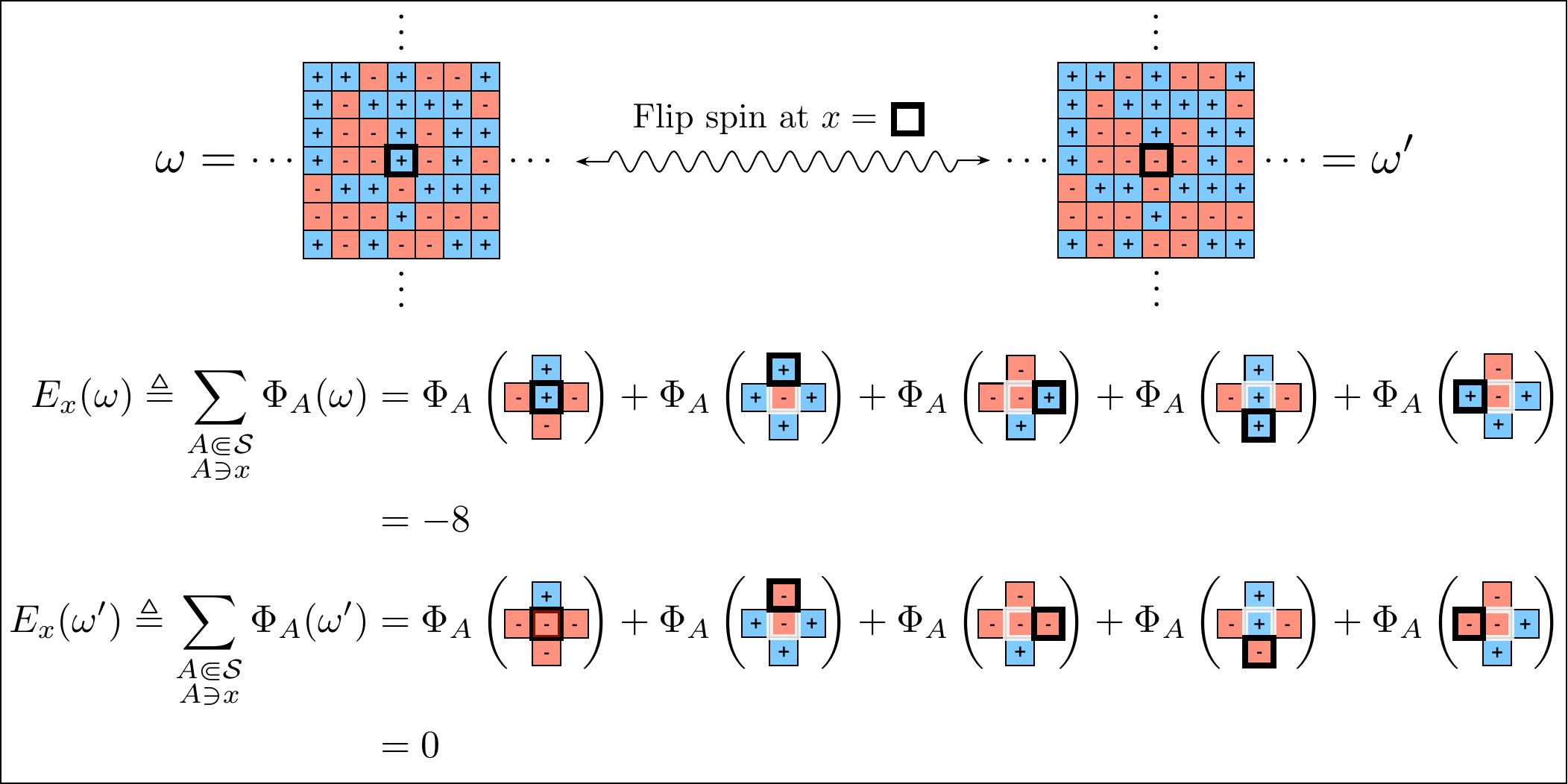} 
   \caption{An instance of two configurations that differ only on one of its sites
   and their corresponding local energies. Here, if the spin at $x$
   were to flip from \textsf{+1} to \textsf{-1}, then the change in energy from
   $\omega$ to $\omega'$ would be
   $\Delta E_x(\omega)=8$
   (see equations \eqref{eq:local0} and \eqref{eq:local1}),
   hence there is a high probability for this flip to occur
   (see equation \eqref{eq:GlauberProbability}),
   and viceversa, if it were to flip from 
   \textsf{-1} to \textsf{+1}, then we would have $\Delta E_x(\omega')=-8$ and so
   transitioning from $\omega'$ to $\omega$ would be unlikely.
   }
   \label{fig:isingGlauber}
\end{figure}
\begin{enumerate}
	\item \label{GD1}
	Choose a site $x \in \mathcal S$ uniformly at random and then compute the local
	energy of $\omega$ at $x$, namely,
	\begin{align} \label{eq:local0}
		E_x(\omega) & \triangleq \sum\limits_{\substack{A\Subset \mathcal S\\ A \ni x}}\Phi_A(\omega) \\ \label{eq:local1}
		& = -h \omega_x - 2J\omega_x \sigma_x(\omega) - J\sum\limits_{y\in N(x)}\sum\limits_{z\in N(y)\setminus\{x\}} \omega_y\omega_z.
	\end{align}
	\item \label{GD2}
	Compute the change in energy $\Delta E_x(\omega)$
	of the configuration $\omega$ if the 
	spin at $x$ were to flip. To be precise, let $\omega' \in \Omega$ be
	defined by the rule $\omega'_y = \omega_y$ if and only if $y \neq x$,
	and then let
	\begin{align}
	\label{eq:gd21}
		\Delta E_x(\omega) & \triangleq E(\omega') - E(\omega) \\
		\label{eq:gd22}
		& = E_x(\omega') - E_x(\omega) \\
		\label{eq:gd23}
		& = 2h \omega_x + 4J\omega_x \sigma_x(\omega).
	\end{align}
	\item \label{GD3}
	Accept the new configuration $\omega'$, i.e. change the state of $\omega$ in site $x$, with probability
	\begin{equation}
	\label{eq:GlauberProbability}
	    \frac{1}{1+ e^{-\beta \cdot \Delta E_x(\omega)}}.
	\end{equation}
	\item
	Repeat \ref{GD1}-\ref{GD3}.
\end{enumerate}

\begin{figure}[t] 
   \centering
   \includegraphics[width=4.7in]{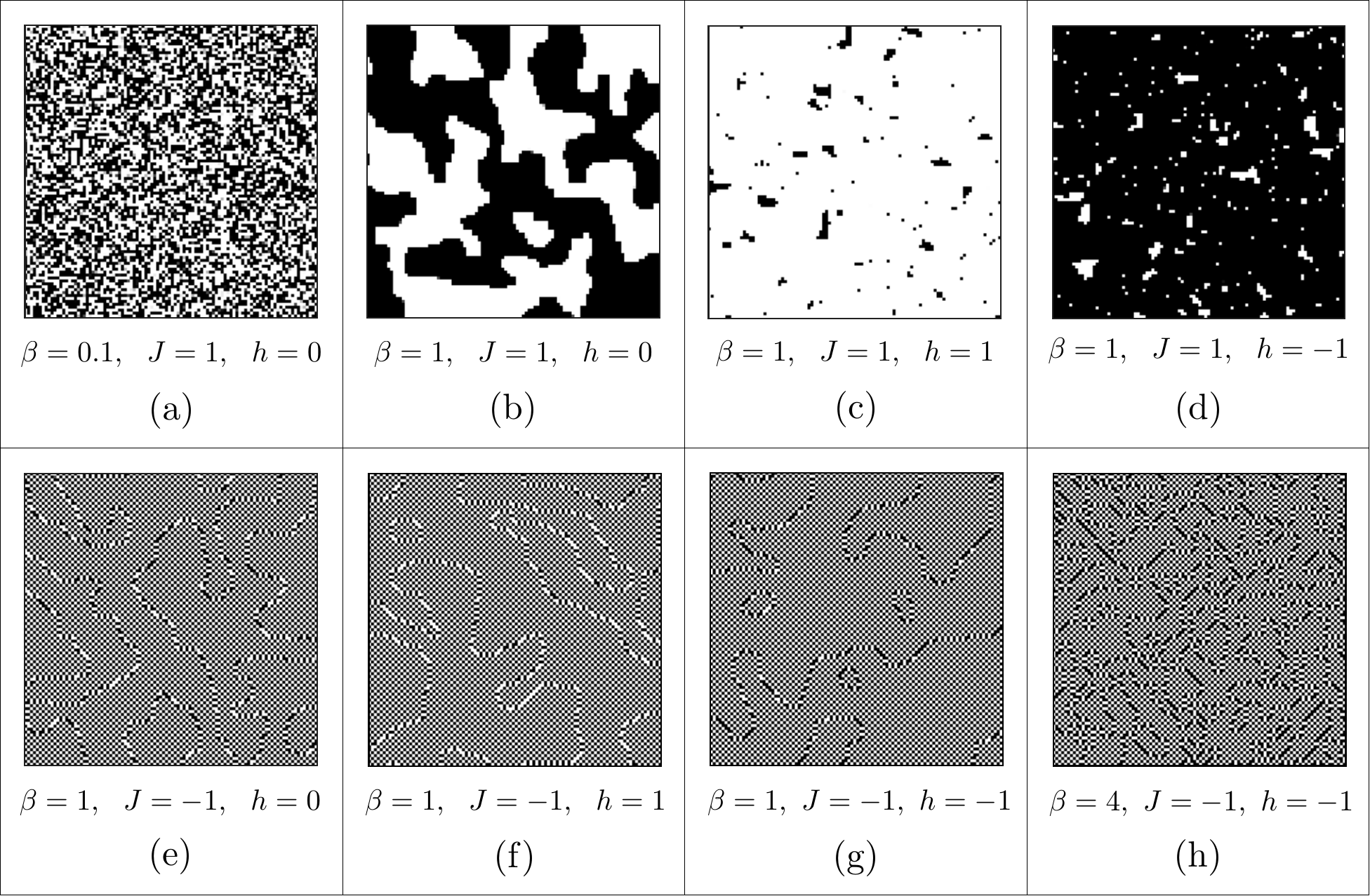} 
   \caption{Simulations of generic configurations for the Ising model for different
   parameter values ($\beta$ is the temperature, $J$ is the weight
   and $h$ is the external magnetic field):
   (a) the temperature is nearly zero,
   so the field behaves like an independent \textsf{Bernoulli}$(1/2)$ random field;
   (b) now the temperature is positive, the system is ferromagnetic
   and the preferred configurations have large connected components of particles
   with the same spin that form ``islands'' of \textsf{+1}s
   (positively charged, in white) and \textsf{-1}s (negatively charged, in black);
   (c) in this case there is an external positive magnetic field that influences
   the energy and so it eventually converges to the configuration
   made out of particles with positive spins;
   (d) now the external magnetic field is negative and the preferred
   configuration is made out of particles with negative spins;
   (e) the system is now antiferromagnetic (with no external magnetic field) and, as a consequence,
   the preferred configurations have large components of particles whose spins make
   checkerboards with a predetermined parity; these configurations have \emph{boundaries}
   which are the regions of adjacency of components of distinct parities
   and where the checkerboard pattern is broken;
   (f) in this case the system is again antiferromagnetic but now there is a positive
   external magnetic field acting on the system, it only influences the 
   boundaries that tend to be positively charged (white);
   (g) in this case the system is again antiferromagnetic but now the external magnetic field
   is negative and so the boundaries tend to be negatively charged (black);
   (h) this is like the previous case, the difference is that now we have raised the temperature
   and the simulation yields convergence to metastable states with large boundaries that are
   negatively charged due to the external magnetic field.}
   \label{fig:Ising}
\end{figure}

Algorithms like the above converge to generic configurations in $\Omega$
with respect to the corresponding Gibbs measure in the Ising model.
In Figure \ref{fig:Ising} we present and describe the outcomes one can obtain
for several parameter values.
Other similar algorithms that also converge to generic configurations
have been used, e.g. the Metropolis-Hastings
algorithm differs from Glauber dynamics in that in step 1, the choice of
the site $x$ is deterministic (for example, the site $x$ is picked one by one,
following some prescribed order in $\mathcal S$), and also in step 3 the acceptance probability
always flips in favor of lowering the energy since a flip always occurs if
$\Delta E_x(\omega) \geq 0$ (here, the temperature is assumed to be positive), 
otherwise the probability of accepting a flip is $e^{\beta \cdot \Delta E/T}$.

These type of models and simulations
admit many variants and have been generalized in many different
directions. For example, they can incorporate
boundary conditions, also called \emph{frozen regions},
as well as hard square type of restrictions.
Our goal is to adapt some of these ideas to the universe of musical scales,
to develop a robust enough model to perform systematic simulations that can yield
subsets of musical scales with precise properties. Let us move now towards this direction.

\section{The \emph{Scaletor} model}
\label{sec:scaletor}

In this section we describe the mathematical aspects in the implementation of
\emph{Scaletor}, the open access application that adapts the ideas of the Ising model
in the context of configurations of musical scales. So the fist subsection
\ref{subsec:configurations} describes
the general space of configurations of musical
scales used in \emph{Scaletor} at any instant of time.
To realize the code, we require a mathematical model to construct all the musical scales, and
for this we follow \cite{Gomez23, Gomez21, GomezNasser21} and continue 
thinking of the class of all musical scales\footnote{We mean all the musical scales \emph{in a given tonality} in the $n$-TET tuning system, for all $n\geq 1$.} as
the combinatorial class $\mathcal C$ of integer compositions. So in the next subsection
\ref{subsec:recursion} we focus on the construction of $\mathcal C$
and also address the modes of scales. In the Ising model there is a network
which corresponds to the square $\mathbb Z^d_n$-lattice,
so we will also require networks of musical scales
and this is precisely the subject of subsection \ref{subsec:networks} (see also \cite{Gomez21}).
Next, in subsection \ref{subsec:interactions} we address the musical interactions that have
been implemented in \emph{Scaletor}: intervalic interactions, balance interactions,
modes interactions and interpolation interactions.
Subsection \ref{subsec:dedicated} aims to describe additional features in \emph{Scaletor}
that in particular allow us to manipulate the spaces of configurations of scales, frozen sites
and the first symbol rule (for the later see \cite{Gomez23, GomezNasser21}).
The last subsection \ref{subsec:GlauberScaletor} addresses the simulation process
in \emph{Scaletor}, in particular we explain how we realize the idea of external magnetic fields.

\subsection{Spaces of configurations of scales}
\label{subsec:configurations}

The starting point is to think of the set of musical scales as the set of sites $\mathcal S$. 
\emph{Scaletor} constructs this class inductively with a recursion
formula (see equation \eqref{eq:recursion} below), it
generates the (finite) sets
$\mathcal C_n\Subset \mathcal C$ of compositions of $n\geq 1$.
Recall that it is precisely $\mathcal C_n$ the set that corresponds to all
the scales in a given tonality in the $n$-TET tuning system, thus
we are naturally interested in the case $n=12$.

Initially, the spin configuration space
is the complete set of configurations
\begin{align}
\label{eq:wholeSystem}
	\Omega = \{\textsf{+1},\textsf{-1}\}^{\mathcal C_n}
\end{align}
(compare with equation \eqref{eq:initialIsing}).
More generally, we will be able to carry out the simulations
on predetermined restricted subsets of sites\footnote{The complement $\mathcal C_n\setminus\mathcal X$ is thought as a set of scales
fixed in the \textsf{Off} state that \emph{do not} interact with the elements of $\mathcal X$.} $\mathcal X \subseteq \mathcal C_n$
with prescribed regions $\mathcal F \subseteq \mathcal X$
that are \emph{frozen} on the positive state and still they \emph{do} interact with the elements of $\mathcal X\setminus\mathcal F$.
To be precise, given $(\mathcal X, \mathcal F)$, the space of configurations
on which the simulations are carried out is
\begin{align}
\label{eq:subsystems}
	\Omega = \Omega (\mathcal X, \mathcal F) \triangleq
	\{ x \in \{\textsf{+1},\textsf{-1}\}^{\mathcal X} \, : \, x_s = \textsf{+1} \, \, \, \forall s \in \mathcal F\}
\end{align}
(for example, equation \eqref{eq:subsystems} yields equation \eqref{eq:wholeSystem}
when $\mathcal X = \mathcal C_n$ and $\mathcal F = \varnothing$).

\subsection{Scales as integer compositions and modes as orbits of cyclic shift actions}
\label{subsec:recursion}

Let us recall formal definitions.

\subsubsection{Integer compositions}
Let $\mathbb N \triangleq \{1,2,3,\ldots \}$
be the class of positive integers.
The class $\mathcal C$ of \emph{integer compositions} consists of
all finite sequences of positive integers, that is, $\mathcal C \triangleq \cup_{k\geq 1} \mathbb N ^k$.
Let $s = (n_1, \ldots , n_k) \in \mathcal C$ be an integer composition.
The \emph{size} of $s$ is $\lvert s \rvert = n \triangleq n_1+\cdots + n_k$ and 
the \emph{length} of $s$ is $\ell(s)\triangleq k$.
Let $\mathcal C_n$ denote the class of integer compositions of size $n \geq 1$.
For example, $\mathcal C_1=\{(1)\}$, $\mathcal C_2=\{(2), (1,1)\}$, $\mathcal C_3=\{(3), (2,1), (1,2), (1,1,1)\}$ and so forth.
Also, for every $k \geq 1$, let $\mathcal C^{(k)} \triangleq \{ s \in \mathcal C \, : \, \ell(s) = k \}$ be the set of integer compositions of length $k$, and for every subset $A \subseteq \mathcal C$, let $A^{(k)}\triangleq A\cap \mathcal C^{(k)}$, e.g. $\mathcal C_n^{(k)}$ denotes the set of compositions of $n$ of length $k$.

Let
$s\star 1 \triangleq (n_1, \ldots , n_{k-1}, n_k + 1)$
and
$s\diamond 1 \triangleq (n_1, \ldots , n_{k-1}, n_k, 1)$.
Also, for every subset $A\subseteq \mathcal C$,
let $A \star 1 \triangleq \{s\star 1 \: \rvert \: s \in S\}$ and
$A \diamond 1 \triangleq \{s\diamond 1 \: \rvert \: s \in S\}$.
Then we have the recursive specification
\begin{equation}
\label{eq:recursion}
	\mathcal C_{n+1} = (\mathcal C_{n} \star 1) + 
	(\mathcal C_n \diamond 1)
\end{equation}
that in words says that any composition of size $n+1$ results from
a composition of size $n$ by either adding a one to its last entry
(in particular, the length remains the same and the size increases by 1) or
appending a new entry with a one at the end of the sequence (in particular,
in this case, both length and size increase one unit each).
Hence there are $\#\mathcal C_n = 2^{n-1}$ compositions of $n$ and
also there are $\#\mathcal C_n^{(k)} = \binom{n-1}{k-1}$ compositions of $n$ of length $k$.
The distribution of $\ell\colon \mathcal C_n\to\{1,\ldots n\}$
as a random parameter on the probability
space $\mathcal C_n$ where all compositions are equally likely
(uniform distribution) is \textsf{Binomial}$(2^{n-1},1/2)$.
In Figure~\ref{fig:gauss} we illustrate the \emph{bell of scales}, i.e
the histogram of scales according to length, that is, according to the number
of pitch classes. The bell of scales serves to visualize the whole set of
musical scales in the $n$-TET tunning system in a given tonality.

\begin{figure}[h] 
   \centering
   \includegraphics[width=4.7in]{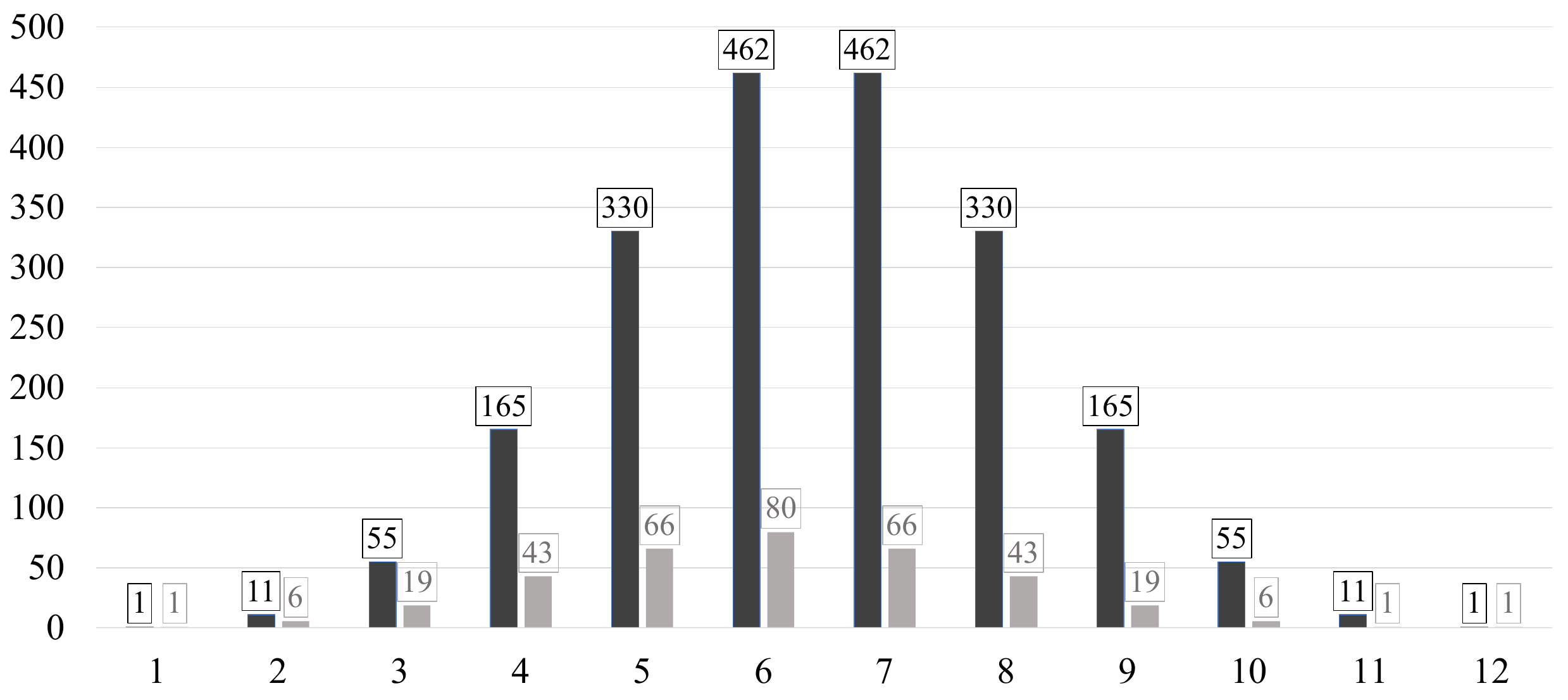} 
   \caption{The dark gray bars illustrate the \emph{bell of scales}: the histogram of scales
   according to the number of pitch classes for the 12-TET tuning system
   (for example, there are 462 scales with either six or seven pitch classes).
   The light gray bars illustrate the histogram of scales modulo the modes: for example,
   there are 80 and 66 mode classes with six and seven pitch classes,
   respectively.}
   \label{fig:gauss}
\end{figure}

From equation \eqref{eq:recursion}, we can define a natural
recursive order in $\mathcal C$: the first element
is $(1)$, then $(2)$ and $(1,1)$, then $(3)$, $(2,1)$, $(1,2)$ and $(1,1,1)$,
and so forth (for example, the compositions of $n$ given by
$(n)$ and $(1, \ldots , 1) $ are always the first and last compositions in
$\mathcal C_n$ and they occupy positions $2^{n-1}$ and $2^n-1$
with respect to this
recursive order, respectively).

The class $\mathcal C_n$ of compositions of $n$ is identified with the
class of all musical scales in the $n$-TET tuning system
in a given tonality like \textsf{C}. 
As in other sources like in \cite{BerlinerCastroMerrittSouthard18},
geometrically, we can illustrate a composition $s \in \mathcal C_n$
of length $\ell(s)=k$ as a rooted $k$-gon that results from a rooted regular $n$-gon
inscribed in a circle after selecting the root
and $k-1$ other vertices separated
according to the composition $s$ (we put the root always on top, at $n$ o'clock,
labelled by the chosen tonality, e.g. we chose \textsf C,
but we may choose any other tonality from $\{
\textsf{C, C\#, D, D\#, E, F, F\#, G, G\#, A, A\#, B}\}$,
see Figure~\ref{fig:scale}. For example, 
$(n)$ and $(1, \ldots , 1)$
are the monotonic and chromatic scales, respectively. 
\emph{Scaletor} uses the recursion in equation \eqref{eq:recursion} to generate all the scales in the $n$-TET tuning system, that is, to generate $\mathcal C_n$.

\begin{figure}[h] 
   \centering
   \includegraphics[width=4.68in]{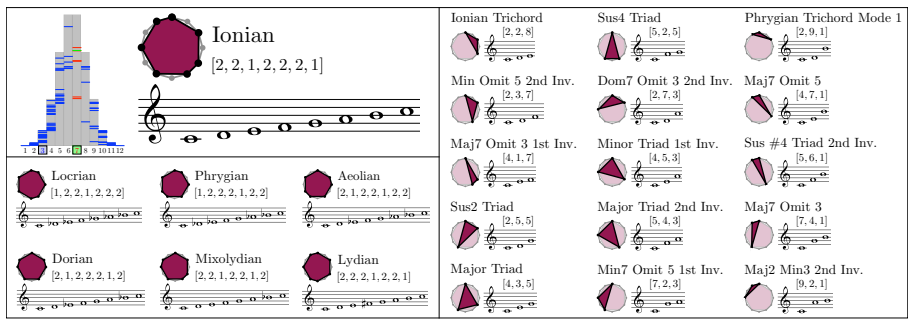}
   \caption{The selected scale in the upper left corner is the Ionian (major) scale (in a given tonality, in this case \textsf{C}), it is represented by the composition $(2,2,1,2,2,2,1)$ of $12$ (it corresponds to the green bar in the bell of scales, in the $7$th column, zoom in). The modes of the Ionian scale are shown in the lower-left corner of the figure (they are represented as red bars in the bell of scales, in the $7$th column too, and in \emph{Scaletor}, by default,
the base note in the modes remains the same for all as shown in this figure,
but pressing the key \fbox{\textsf{\tiny A}} in the keyboard lets you switch back and forth
between this default setting and to show the modes with different base notes in a way that
the pitch classes of the selected scale and all its modes remain the same, e.g. Locrian
would start in \textsf{B}, Mixolydian in \textsf{G}, etc.). The blue bars in the bell of scales
are all the scales that interpolate the Ionian scale (see subsection \ref{subsec:interpolation}).
For example, in the right part of the figure we illustrate all the tri-chords (triangles, in the 3rd column),
containing the root \textsf{C}, that interpolate into the major scale.}
   \label{fig:scale}
\end{figure}

\subsubsection{Modes and cyclic shift actions}

Let $\alpha \colon \mathcal C \to \mathcal C$ be the cyclic shift action
defined for every $s =(n_1, \ldots , n_k)$ by $\alpha(s)=(n_2,\ldots, n_k, n_1)$
(note that $\alpha\rvert_{\mathcal C_n^{(k)}}\colon \mathcal C_n^{(k)} \to \mathcal C_n^{(k)}$).
The elements of the $\alpha$-orbit of $s$, namely $\mathcal O_\alpha(s)\triangleq\{\alpha^n(s) \, : \, n \in \mathbb Z\}$, are the \emph{modes} of $s$.
The size of the orbit of $s$ divides its length: $\#\mathcal O_\alpha(s)\rvert\ell(s)$.
If $A\subseteq \mathcal C_n$ is a subset of scales, a \emph{transversal} is
a set of scales $T=\{s_1,\ldots,s_{t(A)}\}$ such that
for ever $s\in A$ there exists one and only one $j\in\{1,\ldots, t(A)\}$ such
that $s_j\in\mathcal O_\alpha(s)$. Transversals always exist, the
\emph{transversal dimension} $t(A)$ is a well defined positive integer,
and we can always choose $T\subseteq A$.
For more on mathematical aspects of modes of scales see \cite{Gomez23}.

\subsection{Networks of musical scales}
\label{subsec:networks}

We are going to construct various networks with
$\mathcal C$ as the set of vertices. If such a network is denoted by some generic
symbol $\mathfrak G$ and $s \in \mathcal C$, then we  let
the neighborhood of $s$ in $\mathfrak G$ be
$N_{\mathfrak G}(s) \triangleq \{ t \in \mathcal C \, : \, \hbox{$(s,t)$ is an edge in $\mathfrak G$}\}$.

\subsubsection{Composition tree}

For every $n\geq 2$,
we put a thick (resp. thin) edge between elements
$w \in \mathcal C_{n-1}$ and $v \in \mathcal C_{n}$ whenever
$v = w \star 1$ (resp. $v = w \diamond 1$). The \emph{composition tree} $\mathfrak T$
is the resulting network $\mathfrak T$, it has a binary tree with thick and thin edges.
We label the vertices of $\mathfrak T$ with the elements of $\mathcal C$ as
shown in Figure \ref{fig:hypercube} ($\mathfrak T$ is the red binary tree).
In the composition tree, the set of vertices in the $n$th generation
is $\mathcal C_n$. In Figure \ref{fig:hypercube} the vertices are also shown either
as filled or non-filled vertices according to whether the last entry of the composition
is or is not greater than one, respectively.
To be precise, a site $u\in \mathcal C_n$ is filled (resp. non-filled) if it
results from a site $w \in \mathcal C_{n-1}$ through $u =w\star 1$
(resp. $u = w \diamond 1$), i.e. if $u$ is connected in $\mathfrak T$ to its parent
with a thick (resp. thin) edge.

\begin{figure}[h] 
   \centering
   \includegraphics[width=4.7in]{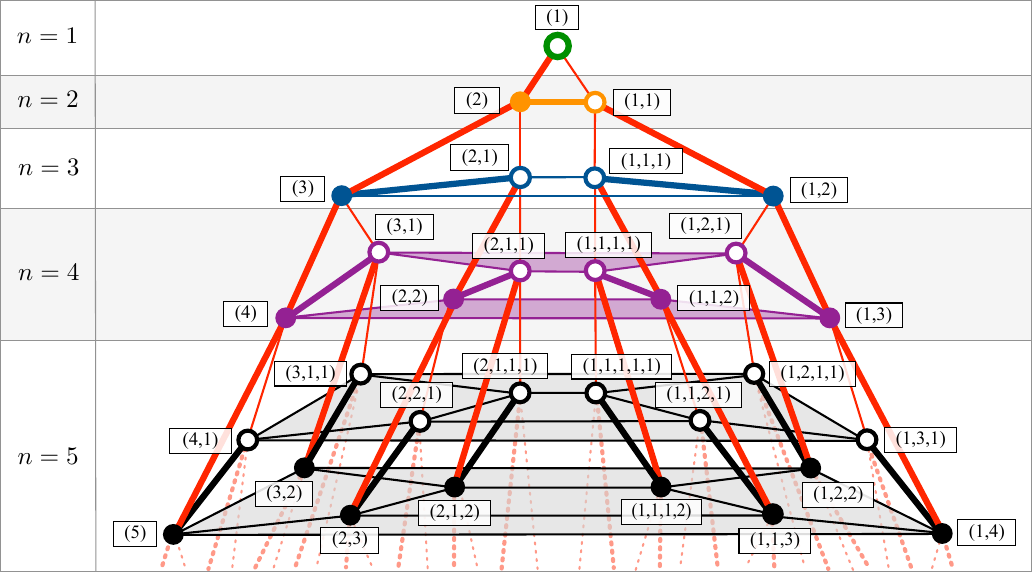} 
   \caption{The beginnings of the composition tree $\mathfrak T$ (in red) and
   the interpolation network $\mathfrak I$. At each stage $n\geq 1$
   the induced interpolation network on $\mathcal C_n$ is a hypercube (the root is green,
   the segment is yellow, the square is blue, the cube is purple and the tesseract is black).
   All the non-filled vertices correspond to compositions that end with $1$.}
   \label{fig:hypercube}
\end{figure}

\subsubsection{Interpolation network}
\label{subsec:interpolation}

With the composition tree $\mathfrak T$,
we can construct the \emph{interpolation network} $\mathfrak I$.
It is also built by adding edges inductively, but in this case, in the $n$th iteration
only edges between certain pair of elements of $\mathcal C_n$ are added,
and again there will bee two types (thick and thin edges).
Initially, there are no edges in $\mathcal C_1$. For $n > 1$,
there is a thick edge in the interpolation network $\mathfrak I$
between $u,v\in \mathcal C_n$ whenever they have
the ``same parent'' in $\mathcal T$,
that is, whenever there exists $w\in \mathcal C_{n-1}$
such that $\{u,v\}=\{w\star 1,w\diamond 1\}$. 
Next, the thin edges in the interpolation network $\mathfrak I$ only
occur between elements in $\mathcal C_n$ of the same 
kind, either filled or non-filled, and the rule is that two vertices
$u,v \in \mathcal C_n$ of the same kind are adjacent in $\mathfrak I$
with a thin edge whenever they have distinct parents in
$\mathfrak T_{n-1} = \mathcal C_{n-1}$
which are adjacent in $\mathfrak I$
(no matter what type of adjacency, either thick or thin).
In words, the thin edges occur between cousins of the
same kind (either both filled, or both non-filled).
This inductive process is implemented in \emph{Scaletor},
let us justify the purpose:

\begin{definition}
Two compositions $s= (s_1, \ldots , s_k) \in \mathcal C_n$
and $t = (t_1, \ldots , t_j)\in \mathcal C_n$
\emph{interpolate} if $k = j+ 1$ (or $j = k + 1$) and
there is $i \in \{1, \dots , k\}$ (or $i \in \{1, \dots , j\}$) such that
$s= (t_1, \ldots , t_{i-1}, s_i, s_{i+1}, t_{i+1}, \ldots , t_{j})$
(or $t= (s_1, \ldots , s_{i-1}, t_i, t_{i+1}, s_{i+1},\ldots , s_{k})$).
\end{definition}

In words, two scales interpolate if one results from the other
by either adding or subtracting a pitch class (distinct from the tonality).
In particular, the lengths of two scales that interpolate most differ exactly by one unit.
The following conclusion easily follows by induction
(we include the proof for completeness, also see \cite{Gomez21}).

\begin{theorem}
\label{thm:interpolation}
Two compositions in $\mathcal C_n$ interpolate if and only if
they are adjacent in the interpolation network. Furthermore, the induced
interpolation network with vertex set $\mathcal C_n$ is a hypercube of dimension $n$.
\end{theorem}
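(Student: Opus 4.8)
The plan is to prove both assertions simultaneously by induction on $n$, driven entirely by the recursion $\mathcal C_{n+1} = (\mathcal C_n \star 1) + (\mathcal C_n \diamond 1)$ of equation \eqref{eq:recursion}. The base cases are immediate: $\mathcal C_1 = \{(1)\}$ is a single vertex with no edges and no interpolating pair, while $\mathcal C_2 = \{(2),(1,1)\}$ is a single thick edge joining the two children of $(1)$, and indeed $(2)$ splits into $(1,1)$, so the two interpolate. The inductive engine is the remark that the two injections $s \mapsto s\star 1$ and $s \mapsto s\diamond 1$ exhibit $\mathcal C_{n+1}$ as two disjoint copies of $\mathcal C_n$: the filled copy $\mathcal C_n \star 1$ and the non-filled copy $\mathcal C_n \diamond 1$. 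I would then line this decomposition up with the standard recursive construction of hypercubes, namely that a hypercube of dimension $d+1$ is the Cartesian graph product of a $d$-dimensional hypercube with a single edge $K_2$, i.e. two copies of the $d$-cube joined by a perfect matching between corresponding vertices.

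Next I would verify that the edge rules defining $\mathfrak I$ realize exactly this product. The thick edges join $w\star 1$ to $w\diamond 1$ for each $w\in\mathcal C_n$; these are in bijection with $\mathcal C_n$ and form precisely the perfect matching linking the two copies. By the inductive hypothesis, the thin-edge rule (same-kind vertices whose parents are adjacent in $\mathfrak I$ on $\mathcal C_n$) must reproduce the $\mathcal C_n$-hypercube separately on the filled copy and on the non-filled copy. The iff statement falls out in the same breath: unwinding the definition, two compositions of $\mathcal C_{n+1}$ interpolate exactly when one is obtained from the other by splitting a single entry into two, and such a pair occurs in only two patterns --- the two children $w\star 1$, $w\diamond 1$ of a common parent (a thick edge), or the same-kind images $u\star 1,\,v\star 1$ (or $u\diamond 1,\,v\diamond 1$) of interpolating parents $u,v\in\mathcal C_n$ (a thin edge), where adjacency of parents already encodes interpolation by induction.

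For the dimension I would track it through the recursion: each step adjoins one extra $K_2$ factor via the Cartesian product, so the hypercube dimension rises by exactly one per stage. Beginning from the point $\mathcal C_1$ and iterating, stage $n$ reaches the hypercube asserted in the theorem; concretely, since $\#\mathcal C_n = 2^{n-1}$, the underlying graph is $Q_{n-1}$, so the label ``dimension $n$'' records the stage index (the $n$-th cube in the nested family $\mathcal C_1,\mathcal C_2,\dots$) rather than the graph dimension in the usual $2^{d}$-vertex convention. The main obstacle is the thin-edge bookkeeping in the inductive step: one must confirm that the ``same-kind cousins with adjacent parents'' rule introduces neither spurious adjacencies across the two copies nor missing adjacencies within a copy, i.e. that it restricts to the correct $\mathcal C_n$-hypercube on each copy and is consistent with the interpolation characterization carried by the inductive hypothesis. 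Once this matching of edge sets is pinned down, the iff, the hypercube structure, and the dimension count all close together.
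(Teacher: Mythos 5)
Your proposal is correct and is essentially the paper's own argument: the same induction on $n$ driven by the recursion \eqref{eq:recursion}, with thick edges forming the perfect matching between the filled copy $\mathcal C_n \star 1$ and the non-filled copy $\mathcal C_n \diamond 1$, thin edges reproducing the previous stage on each copy, and the same-kind exclusion for interpolating pairs with distinct parents (the point you flag as the main obstacle) being exactly what the paper settles by the short contradiction argument ($k=j+1$ forces equal parents, $j=k+1$ forces $1 = t_k + t_{k+1} > 1$). Your one real departure is a sharpening rather than a different route: making the hypercube recursion (Cartesian product with $K_2$) explicit and observing that in the standard $2^d$-vertex convention the induced graph on $\mathcal C_n$ is $Q_{n-1}$, which tidies the paper's loose dimension bookkeeping (its statement says ``dimension $n$'' while its proof says ``dimension $2^{n-1}$'', the vertex count) and is consistent with the paper's later reference to ``the hypercube of dimension 11'' for $n = 12$.
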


\begin{proof}
We do induction on the size $n \geq 1$. The result is true for $n = 1, 2, 3$ (see Figure~\ref{fig:hypercube}).
Suppose that the result is true for every positive integer less than $n\geq 4$.
Let $s=(s_1,\ldots , s_k)$ and $t=(t_1, \ldots , t_{j})$ be two compositions in $\mathcal C_n$,
and let
$\hat s \in \mathcal C_{n-1}$ and $\hat t \in \mathcal C_{n-1}$ be their
parents, respectively.

Suppose that $s$ and $t$ are adjacent in $\mathfrak I$.
Clearly, if $\hat s = \hat t$ (i.e. if $s$ and $t$ are adjacent in $\mathfrak I$
by a thick edge),
then $s$ and $t$ interpolate.
Suppose that $\hat s \neq \hat t$. Since $s$ and $t$ are adjacent by assumption,
they are of the same kind and $\hat s$ and $\hat t$ are adjacent.
By the inductive hypothesis, $\hat s$ and $\hat t$ interpolate.
If $s$ and $t$ are filled vertices, i.e. if $s_k,t_j >1$, then
$\hat s = (s_1,\ldots , s_k-1)$ and $\hat t = (t_1, \ldots , t_j-1)$,
and since they interpolate, clearly so do $s$ and $t$. Similarly,
if $s$ and $t$ are non-filled vertices, then $s_k=t_j=1$ and hence
$\hat s = (s_1,\ldots , s_{k-1})$ and $\hat t = (t_1, \ldots , t_{j-1})$,
and they interpolate, hence, clearly, so do $s$ and $t$.

Now suppose that $s$ and $t$ interpolate. If $\hat s = \hat t$, then
they are adjacent in $\mathfrak I$. Suppose that $\hat s \neq \hat t$.
First let us show that $s$ and $t$ most be of the same kind. Interpolation
implies $k = j + 1$ or $j = k + 1$. If $s$ and $t$ would be of different kinds,
then we would have $s_k=1$ and $t_j >1$ or viceversa.
Suppose that $s_k=1$ and $t_j>1$. If $k = j + 1$, then we would have
$t_i = s_i$ for all $i < j$ and $t_j = s_{k-1} + 1$, but this implies that
$\hat s = \hat t$, a contradiction.
If $j = k + 1$, then we would have $1 = s_k = t_{k}+t_{k+1}>1$,
again a contradiction. Similarly, supposing that $s_k > 1$ and $t_j = 1$
yields contradictions. Hence $s$ and $t$ are of the same kind.
In any case, whereas
$s_k=t_j = 1$ or $s_k,t_j > 1$, we get that $\hat s$ and $\hat t$ interpolate,
hence the inductive hypothesis implies that they are adjacent, thus so are $s$ and $t$.

The fact that $\mathfrak I$ restricted to $\mathcal C_n$ is a hypercube of dimension $2^{n-1}$ also follows: to obtain $\mathcal C_{n+1}$ we make two copies of $\mathcal C_n$ and pass them
through the composition tree $\mathfrak T$, the binary tree that transform them into $\mathcal C_n \star 1$
and $\mathcal C_n \diamond 1$ (see \eqref{eq:recursion}).
\end{proof}

The previous Theorem \ref{thm:interpolation}, together with
the inductive construction of all musical scales (see equeations \eqref{eq:recursion}),
yield the algorithmic implementation in \emph{Scaletor} that constructs
the integer compositions together with its interpolation network.
Furthermore, we can consider \emph{long range} interpolation networks as follows.
Let $u,v\in\mathcal C_n$ be two scales.
For each $k\geq 1$, let the \emph{$k^+$-step interpolation network}
$\mathfrak I^{k+}$ have a directed edge from $u$ to $v$ if
$u$ results from $v$ by adding $k$ new pitch classes,
and similarly let the \emph{$k^-$-step interpolation network}
$\mathfrak I^{k-}$ have a directed edge from $u$ to $v$
if there is a directed edge from $v$ to $u$ in $\mathfrak I^{k+}$ 
(see Figure \ref{fig:histogramgs}). For example,
we can technically think of the interpolation network $\mathfrak I$
as the union of the $1^+$-step and $1^-$-step interpoletion
(\emph{directed}) networks.
\begin{figure}[h] 
   \centering
   \includegraphics[width=4.7in]{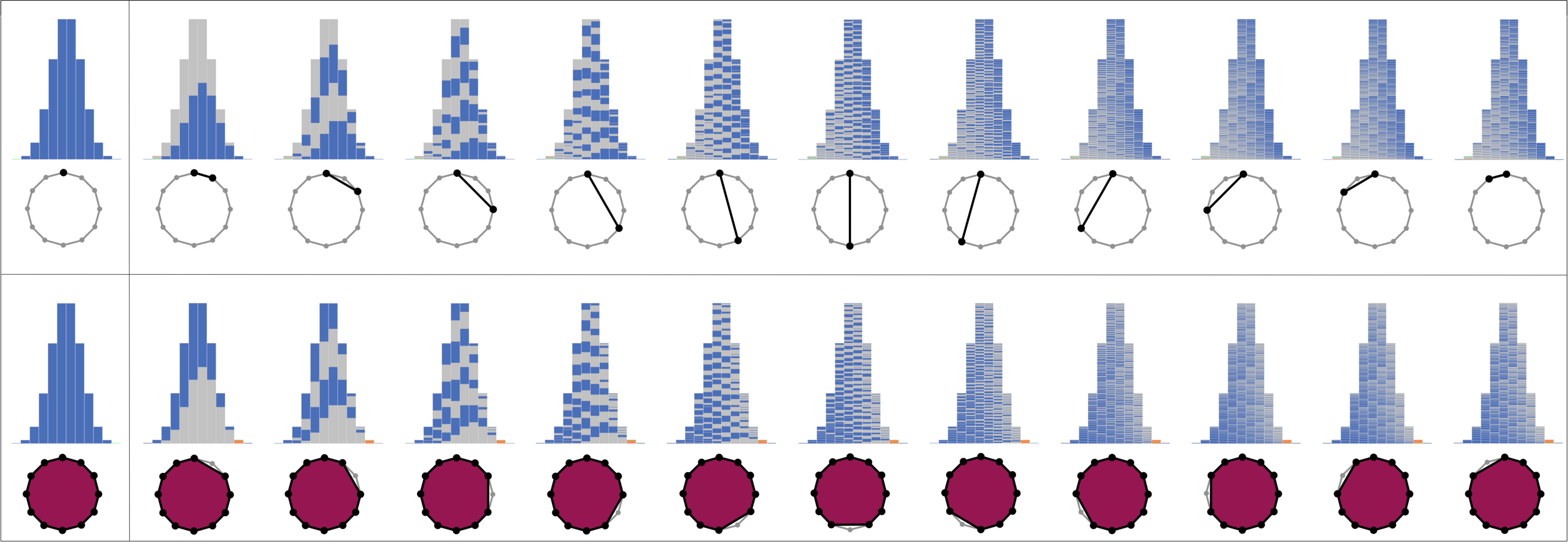}
   \caption{Each bell of scales has a scale under it
   (it corresponds to the green bar in the bell of scales, zoom in).
   The blue bars in each bell of scales are the
   neighbors of the corresponding scale in the
   $k^+$-step and $k^-$-step interpolation networks
   $\mathfrak I^{k+}$ and $\mathfrak I^{k-}$, for all $k\geq 1$. The first row starts with
   the monotonic scale $s=(12)$ (in a given tonality, e.g. \textsf{C}), followed by all the bi-chords
   $s=(1,11), (2,10),\ldots,(11,1)$ (all of them contain the tonal note \textsf{C}).
   The second row are the \emph{complements}, it starts with the
   chromatic scale $s=(1,1,\ldots,1)$, followed by all the eleventh-chords
   $s=(2,1,1,\ldots,1),(1,2,1,\ldots,1), \ldots,(1,\ldots,1,2)$
   (again, all of them contain the tonal note \textsf{C}).
   As before, the red bars represent to the modes of the corresponding scale.
   }
   \label{fig:histogramgs}
\end{figure}

\medskip

\subsubsection{Length network}
\label{sub:sub:lengthNetwork}

The \emph{length network} $\mathfrak L$ also has $\mathcal C$
as the set of vertices and the edges only occur on pairs of elements of $\mathcal C_n$
There is an edge between two sites
$u,v \in \mathcal C_n$ whenever $\ell(u)=\ell(v)$.
Clearly, the length network $\mathfrak L$ restricted to $\mathcal C_n$ is a disjoint union of $n$ cliques $\mathfrak L\rvert_{\mathcal C_n} = \mathfrak K_{m_1} \sqcup \mathfrak K_{m_2} \sqcup \dots \sqcup \mathfrak K_{m_{n}}$
where $\mathfrak K_m$ denotes a clique on $m$ vertices and
$m_k = \binom{n-1}{k-1} = \#\mathcal C_n^{(k)}$
for every $k = 1, \ldots, n$. In the bell of scales,
each column forms the set of vertices of each
of these $n$ cliques.

\subsubsection{Modes network}
\label{sub:sub:modesNetwork}

The \emph{modes network} $\mathfrak M$ is also made out of disjoint unions of cliques:
the clique that contains $s \in \mathcal C$ has vertex set $\mathcal O_\alpha(s)$.
Let $\mathcal T\subset \mathcal C$ be a \emph{transversal} of $\mathcal C/\alpha$,
that is, $\mathcal T$ satisfies that for every $s \in \mathcal C$ there exists a unique $x \in \mathcal T$ such that
$x \in \mathcal O_\alpha(s)$.
Then we have $\mathfrak M = \bigsqcup_{s \in \mathcal T}\mathfrak K_{\#\mathcal O_\alpha(s)}$. 
For example, in the $12$-TET tuning system there are
$\# (\mathcal T \cap \mathcal C_{12}) = 351$ \emph{mode classes} of scales
(i.e. equivalence classes modulo $\alpha$ of size $12$) and, in particular, there are
$\# (\mathcal T \cap \mathcal C^{(5)}_{12}) = 66$
pentatonic mode classes (recall Figure \ref{fig:gauss}).

\subsubsection{Induced networks}

Recall that the spaces of configurations $\Omega = \Omega(\mathcal X, \mathcal F)$
that we will consider are defined, in particular, by a subset $\mathcal X\subseteq \mathcal C_n$
(see equation \eqref{eq:subsystems}).
Here we have defined a set of networks
\begin{align}
\mathfrak N \triangleq \{ \mathcal T, \mathfrak L, \mathfrak M, \mathfrak I, \mathfrak I^{1+}, \mathfrak I^{1-}, \ldots , \mathfrak I^{n+}, \mathfrak I^{n-}\}
\end{align}
and in general, if we are working in a proper restricted subset
$\mathcal X \subsetneq \mathcal C$ in a context in which some networks
$\mathfrak G \in \mathfrak N$ are involved, then we are implicitly assuming
that such $\mathfrak G$s are actually the
\emph{induced} networks with respect to $\mathcal X$.

\subsection{The interactions mix console}
\label{subsec:interactions}

In \emph{Scaletor} we developed 
generic modules or ``tracks'' (see item (b) in Figure \ref{fig:ModulesComponent}),
they are objects that carry components
that serve to control (or mix) the parameters of interactions
of a specific general form discussed below.
A sequence of $K \geq 1$ interactions $\Phi^{(1)}, \Phi^{(2)},\ldots , \Phi^{(K)} \in \Pi(\Omega)$
can be carried in $K$ of these tracks in a container object: the interactions mix console
(see item (c) in Figure \ref{fig:ModulesComponent}).
Each mix will ultimately yield an interaction
$\Phi\in \Pi(\Omega)$, as follows.

First of all, each track carries an
\fbox{\colorbox{OliveGreen}{\textsf{\tiny \color{white}On}}}/\fbox{\colorbox{BrickRed}{\textsf{\tiny \color{Melon}Off}}} (or muting) button that is
modeled with Dirac delta functions, $\delta = \delta_1, \ldots, \delta_K$ (see Figure \ref{fig:deltaButton}).
\begin{figure}[h] 
   \centering
   \includegraphics[width=2in]{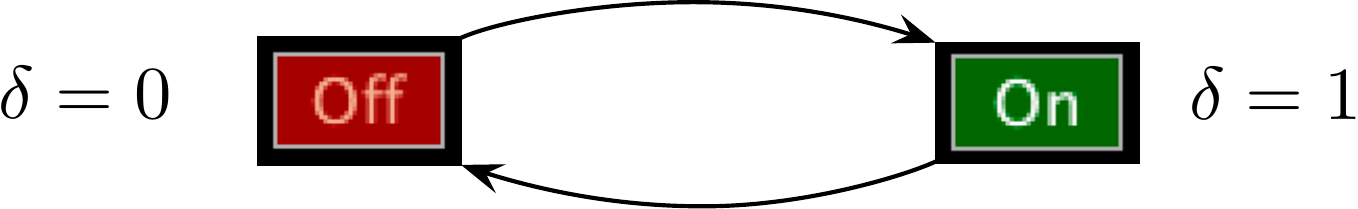} 
   \caption{States of a mute, or
   \fbox{\colorbox{OliveGreen}{\textsf{\tiny \color{white}On}}}/\fbox{\colorbox{BrickRed}{\textsf{\tiny \color{Melon}Off}}}, button:
   it changes its current state when pressed.}
   \label{fig:deltaButton}
\end{figure}

At any instant of time, let
\begin{align}
\Lambda_{\textsf{On}} \triangleq \{ j : \delta_j=1\} \subseteq \{1, \ldots , K \}
\quad \textnormal{ and } \quad
\Lambda_{\textsf{Off}} \triangleq \{1, \ldots , K \} \setminus \Lambda_{\textsf{On}}
\end{align}
be the set of indices of tracks that are turned
\fbox{\colorbox{OliveGreen}{\textsf{\tiny \color{white}On}}}
and \fbox{\colorbox{BrickRed}{\textsf{\tiny \color{Melon}Off}}}, respectively.
Furthermore, the set of tracks $\{ 1, \ldots , K\}$ is also partitioned
into two disjoint sets $\Lambda_{\circ}$ and $\Lambda_{\bullet}$, called
the additive and itersecting tracks.
In \emph{Scaletor}, the final mix of interactions at the level of local energy functions
is 
\begin{align}
\label{eq:mix}
	\fbox{$\displaystyle\Phi_A = \delta \cdot \min \left(\min\left( \Phi^{(k)}_A \right)_{k\in\Lambda_{\bullet} \cap \Lambda_{\textsf{On}}}
	,  \sum\limits_{j\in \Lambda_{\circ}} \delta_j \cdot \Phi_A^{(j)}\right)
	\quad \forall \ A\Subset \mathcal X $}
\end{align}
(above, $\delta$ is yet another Dirac delta function
that acts like the master
\fbox{\colorbox{OliveGreen}{\textsf{\tiny \color{white}On}}}/\fbox{\colorbox{BrickRed}{\textsf{\tiny \color{Melon}Off}}}, 
or muting, switch, recall Figure \ref{fig:deltaButton}).
For example, if all tracks are intersecting and
\fbox{\colorbox{OliveGreen}{\textsf{\tiny \color{white}On}}},
then a configuration $\omega$ that returns a negative weight
for at least one of the $K$ interactions will be penalized and hence
it will tend to be rejected.
The additive part is more subtle, accepting or rejecting depends more on
the average of the additive interactions (like in the Ising model), sort of speak.
To switch the mode of a track, from additive to intersecting or viceversa,
just click the track in a region where there are no buttons nor sliders:
the background of the track when turned
\fbox{\colorbox{OliveGreen}{\textsf{\tiny \color{white}On}}}
will be either green (in the multiplicative case) or
dark red (in the additive case). 
In \emph{Scaletor}, on startup all tracks are multiplicative (green).

According to equation \eqref{eq:interhamilton},
the resulting interaction
$\Phi = \{ \Phi_A \colon \Omega \to \mathbb R \}_{A\Subset \mathcal X}$
defines a hamiltonian energy function $E_\Phi$ which in turn
determines the unique probability measure that maximizes the pressure,
namely the Boltzmann distribution in equation \eqref{eq:boltzman}.
The global inverse temperature, i.e. the parameter $\beta \in \mathbb R$ in equation \eqref{eq:boltzman},
is thought as the ``master volume''.
Item (a) in Figure \ref{fig:ModulesComponent} shows the component to
manipulate $\beta$. Other components shown in this figure are
to manipulate further parameters that we describe next.
\begin{figure}[h] 
   \centering
   \includegraphics[width=4.7in]{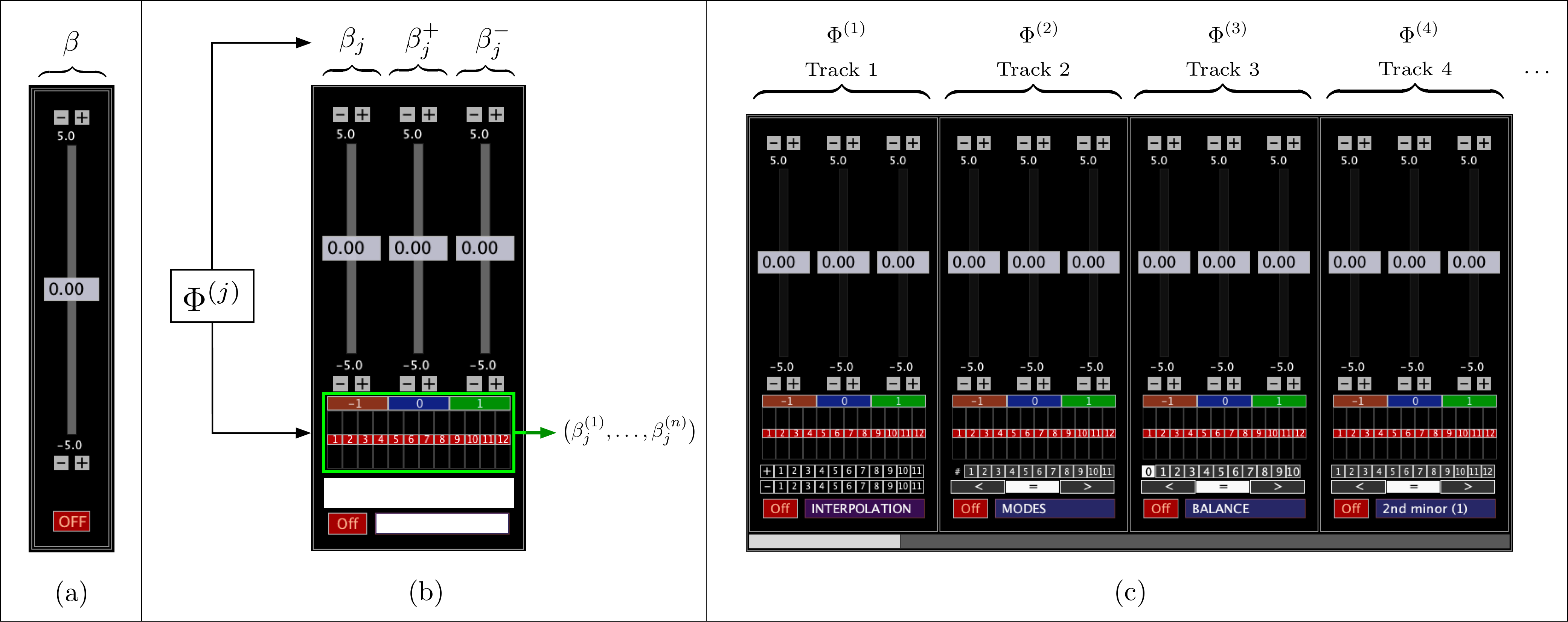} 
   \caption{Thermodynamic components in \emph{Scaletor}: (a) Master volume.
   Component that controls the inverse temperature $\beta$ in equation \eqref{eq:boltzman}.
   The buttons labelled $+$ and $-$ on both the top and the bottom of the slider
   are to increase and decrease the maximum and minimum values that $\beta$
   can take when moving the slider, and by default this interval is $[-5,+5]$.
   (b) Generic interaction track. The three sliders above with
   their dedicated buttons labelled $+$ and $-$ act in a similar way to the
   master volume described in the previous item, but now they control the
   volume of the $j$th channel, namely $\beta_j$, and the corresponding panning
   volumes $\beta_j^+$ and $\beta_j^-$, and below is the
   $n$-band equalizer made out of sliders that
   control the parameters $\beta_j^{(k)}$ (the buttons labelled $+1$, $0$ and $-1$
   above the sliders of the equalizer automatically set all the parameters $\beta_j^{(k)}$
   in the equalizer equal to such a value when pressed).
   The blank areas are meant to provide space
   for possible further controls when the generic module is developed further
   in order to support the interaction that is being designed to be carried on it.
   (c) Interactions mix console. A container component designed to carry sequences
   $\Phi^{(1)},\Phi^{(2)}, \ldots , \Phi^{(K)}$ of
   interaction tracks, each one developed to capture specific properties and
   all with a form given in equation
   \eqref{eq:moduleInteraction}. The scroll bar below serves to move through all the
   $K$ given channels.}
   \label{fig:ModulesComponent}
\end{figure}

To provide flexible and systematic control to
manipulate the simulated annealing processes,
each track that carries one of the $K$ interactions $\Phi^{(j)}$
is equipped with
its own \emph{volume} $\beta_j \in \mathbb R$, 
its two \emph{panning} volumes
$\beta_j^+,\beta_j^- \in \mathbb R$ (they are meant to control how much
a property is ``favored'' or ``penalized'', respectively), and moreover,
the length network $\mathfrak L$ described in subsection
\ref{sub:sub:lengthNetwork} is also implemented
(on each track), modeled as an \emph{$n$-band equalizer}
$\beta_j^{(1)},\beta_j^{(2)},\ldots,\beta_j^{(n)} \in [-1,1]$.
To be precise, for every (finite) region $A \Subset \mathcal X$,
the local energy functions $\Phi_A^{(j)}$ of each of the
interactions $\Phi^{(j)}$ will have the form
\begin{align}
\label{eq:moduleInteraction}
\fbox{$
	\displaystyle\Phi_A^{(j)} = \beta_j \cdot
	\sum\limits_{k=1}^n \beta_j^{(k)} \cdot \left(\beta_j^+\cdot\Phi_{k, A}^{(j+)} - \beta_j^-\cdot\Phi_{k,A}^{(j-)}\right)
	$}
\end{align}
(see items (b) and (c) in Figure \ref{fig:ModulesComponent}).
Combining equations \eqref{eq:moduleInteraction}
with \eqref{eq:mix} and with further custom parameters
will yield an interactions mix console with plenty of control. It is time
to define musical interactions.

\subsection{Interaction tracks}

To define interactions $\Phi^{(j)}$ with local energy functions of the
form described in equation \eqref{eq:moduleInteraction}, we must define
their positive and a negative interaction components $\Phi_{k}^{(j+)}$
and $\Phi_{k}^{(j-)}$ at each length $k = 1,\ldots,n$, respectively.
Equivalently, for each (finite) region $A\Subset \mathcal X$,
we must define the local energy functions of the positive and negative interaction components at length $k$,
\begin{align}
	\Phi_{k,A}^{(j+)}\colon \Omega \to \mathbb R
	\quad \hbox{and} \quad
	\Phi_{k,A}^{(j-)}\colon \Omega \to \mathbb R.
\end{align}
Let us see some explicit examples.
In what follows, $\llbracket \cdot \rrbracket$
denotes Iverson's brackets notation
defined for every proposition $\textnormal{P}$ by
\begin{align}
	\llbracket \textnormal{P} \rrbracket \triangleq \left\{
	\begin{array}{ll}
		1 \quad & \textnormal{if $\textnormal{P}$ is true } \\
		0 & \textnormal{otherwise.}
	\end{array}
	\right.
\end{align}
Also, for every region $A\Subset \mathcal X$ and every configuration $\omega \in \Omega$, let
\begin{align}
\label{eq:magneticWeight}
	\sigma^{\textsf{On}}_A(\omega) \triangleq
	\sum_{y \in A} \llbracket \omega_y = \textsf{+1} \rrbracket .
\end{align}

\subsubsection{Interval tracks}
\label{subsec:intervalTracks}

Let $s\in \mathcal C_n$ be a scale in the $n$-TET tuning system.
We say that an $m$-interval occurs in $s$
if two consecutive pitch classes in $s$ differ by $m$ units of tone,
or equivalently, if we think of $s$ as a composition of $n$, an $m$-interval is
nothing but a summand in $s$ that equals $m$
(for the $12$-TET tuning system, see Table \ref{tab:intervalicD}).

\begin{table}[h]
	\centering
	\begin{tabular}{c|ccccccccccccc}
		$m$ & $1$ & $2$ & $3$ & $4$ & $5$ & $6$ & $7$ & $8$ & $9$ & $10$ & $11$ & $12$\\
		\hline
		Interval & H & W & m3 & M3 & P4
		& TT & P5 & m6 & M6 & m7 & M7 & 8
	\end{tabular} \medskip
	\caption{Intervallic designations in the 12-TET tuning system: H = half tone, W = whole tone, m3 = minor third, M3 = mayor third, P4 = perfect fourth, TT = augmented forth/diminished fifth, P5 = perfect fifth, m6 = minor sixth, M6 = major sixth, m7 = minor seventh, M7 = major seventh, 8 = octave.}
	\label{tab:intervalicD}
\end{table}

An interaction $\Phi \in \Pi(\Omega)$ is intervalic if it returns information
that depends on the intervals. Interval tracks carry intervalic interactions.
For example, consider the combinatorial parameter $\chi_m \colon \mathcal C \to \mathbb N\cup\{0\}$
that returns the number of summands equal to $m\in \mathbb N$, or to be precise, for every
$s = (n_1,\ldots , n_{\ell(s)}) \in \mathcal C$,
$\chi_m(s) \triangleq \# \big\{j \in \{1, \ldots , \ell(s) \} \: \rvert \: n_j = m\big\}$.
In other words, $\chi_m$ returns the number of $m$-intervals in a scale $s$
formed by two consecutive pitch classes in $s$,
e.g. for the Ionian scale $s=(2,2,1,2,2,2,1)$, $\chi_1(s)=2$, $\chi_2(s)=5$, and otherwise
$\chi_m(s)=0$ for $m \geq 3$. With $\chi_m$ we can define intervalic interactions in many ways.
A simple example, the one implemented in 
\emph{Scaletor}, is as follows. Let $\overset R \sim$ be a metasymbol
representing any element of $\{ <, \leq, = ,\neq ,\geq, > \}$.
For each $j \in\{0, \ldots , n\}$, let $\Phi_k$
be the single site interaction defined for every $s\in \mathcal C_n$ by
\begin{align}
	\label{eq:intervalicPlus}
	\Phi_{k,s}^+(\omega) & \triangleq \llbracket \ell(s) =k \rrbracket
	  \cdot \llbracket \chi_m(s) \overset R \sim j \rrbracket \\
	\label{eq:equalityMinus}
        \Phi_{k,s}^-(\omega) & \triangleq \llbracket \ell(s) =k \rrbracket
	  \cdot (1-\llbracket \chi_m(s) \overset R \sim j \rrbracket) .
\end{align}

\emph{Scaletor} possesses twelve interval tracks with intervalic interactions
$\Phi^{(1)}, \ldots , \Phi^{(12)}$
like $\Phi$ above: one for each $m=1, \ldots , n$.
In each track that carries these intervalic interactions, the value of the
parameter $j$ is controlled with specific components\footnote{In the generic module, these specific components
are located in the blank area described before in item (b) of
Figure \ref{fig:ModulesComponent}.} as shown in Figure
\ref{fig:intervalControl}.
\begin{figure}[h] 
   \centering
   \includegraphics[width=4.7in]{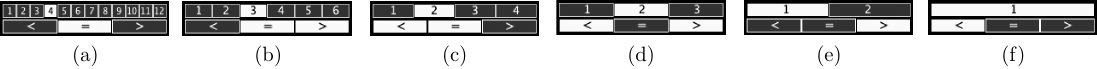} 
   \caption{Components to set the parameter $j$ and the metasymbol $\overset R \sim$.
   The upper part is for the parameter $j$ for (a) H, (b) W, (c) m3, (d) M3, (e) P4, TT, (f) P5, m6, M6, m7, M7, 8. By default, $j = 0$, and when
   a black button labelled with certain value $j$ is pressed,
   then it turns white and the corresponding parameter
   $j$ is set to be such value, e.g. in both (c) and (d) we have $j = 2$ (pressing
   a white button brings the value of $j$ back to zero: $j=0$) but their metasymbols
   are $\leq$ and $\neq$ respectively.}
   \label{fig:intervalControl}
\end{figure}

\subsubsection{Balance tracks}
\label{subsec:balance}
The concept of \emph{balance} of a scale $s\in \mathcal C_n$
is easily defined when a scale of length $k\geq 1$ is pictured
as a polygon that results from choosing $k$ points among $n$
equidistant points arranged in a circle
(the way in which such $k$ point depends on the scale itself),
as described before.
We can suppose that such a circle is unitary and is centered
at the origin. Then the scale is determined by a set of coordinates $\mathbf x^{(1)} = (x_1,y_1)=(0,1), \mathbf x^{(2)} = (x_2,y_2), \ldots, \mathbf x^{(\ell(s))} = (x_{\ell(s)},y_{\ell(s)})$ in the unit circle. Then the \emph{center of balance} of $s$ is defined by
\begin{align}
	\mathbf b(s) & \triangleq\frac{1}{\ell(s)}\sum\limits_{k=1}^{\ell(s)} \mathbf x ^{(k)} 
	= \left(\frac{1}{\ell(s)}\sum\limits_{k=1}^{\ell(s)} x_k ,\frac{1}{\ell(s)}\sum\limits_{k=1}^{\ell(s)} y_k \right) \in [-1,1]^2
\end{align}
and the \emph{balance} of $s$ is $b(s) \triangleq ||\mathbf b(s)||$.
A scale is \emph{balanced} if $b(s)=0$ (equivalently
$\mathbf b (s) = \mathbf 0 \triangleq (0,0)$).
We will define a balance interaction, it is a single site interaction defined for every
$s \in \mathcal C_n$. Once again, we will have a parameter $j =0,1, \ldots , 9, 10$, which is also
manipulated with a dedicated component as shown in Figure \ref{fig:balanceControl}.
\begin{figure}[h] 
   \centering
   \includegraphics[width=1in]{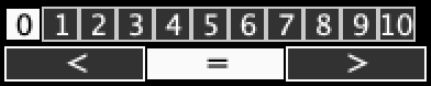} 
   \caption{Dedicated component to manipulate the parameter $j$ in a balance track.}
   \label{fig:balanceControl}
\end{figure}
\begin{itemize}
\item
\fbox{$=$} Equal case. Let 
\begin{align}
	\label{eq:balanceTrackEPlus}
	\Phi_{k,s}^+(\omega) & \triangleq \left\{
	\begin{array}{ll}
	   \llbracket \ell(s) = k \rrbracket \cdot \llbracket \mathbf b(s) = 0 \rrbracket
	   & \textnormal{ if } j=0 \smallskip\\
	    \llbracket \ell(s) = k \rrbracket
            \cdot
	    \llbracket \frac{j-1}{10} < \mathbf b(s) \leq \frac{j}{10} \rrbracket
	   & \textnormal{ if } j>0
	\end{array}
	\right.
	\\
	\label{eq:balanceTrackEMinus}
        \Phi_{k,s}^-(\omega) & \triangleq
        \llbracket  \ell(s) = k \rrbracket \cdot
	  \big(1-\Phi_{k,s}^+(\omega)\big) .
\end{align}
\item 
\fbox{$\geq$} Greater than or equal case. Let 
\begin{align}
	\label{eq:balanceTrackGEPlus}
	\Phi_{k,s}^+(\omega) & \triangleq \left\{
	\begin{array}{ll}
	   \llbracket \ell(s) = k \rrbracket
	   & \textnormal{ if } j=0 \smallskip\\
	    \llbracket \ell(s) = k \rrbracket
            \cdot
	    \llbracket \frac{j-1}{10} < \mathbf b(s) \rrbracket
	   & \textnormal{ if } j>0
	\end{array}
	\right.
	\\
	\label{eq:balanceTrackGEMinus}
        \Phi_{k,s}^-(\omega) & \triangleq
        \llbracket  \ell(s) = k \rrbracket \cdot
	  \big(1-\Phi_{k,s}^+(\omega)\big) .
\end{align}
\item 
\fbox{$>$} Greater than case. Let 
\begin{align}
	\label{eq:balanceTrackGPlus}
	\Phi_{k,s}^+(\omega) & \triangleq
	\llbracket \ell(s) = k \rrbracket
         \cdot
	 \llbracket \frac{j}{10} < \mathbf b(s) \rrbracket
	\\
	\label{eq:balanceTrackGMinus}
        \Phi_{k,s}^-(\omega) & \triangleq
        \llbracket  \ell(s) = k \rrbracket \cdot
	  \big(1-\Phi_{k,s}^+(\omega)\big) .
\end{align}
\item 
\fbox{$\leq$} Less than or equal case. Let 
\begin{align}
	\label{eq:balanceTrackLEPlus}
	\Phi_{k,s}^+(\omega) & \triangleq 
	\llbracket \ell(s) = k \rrbracket
         \cdot
	 \llbracket \mathbf b(s) \leq \frac{j}{10} \rrbracket
	\\
	\label{eq:balanceTrackLEMinus}
        \Phi_{k,s}^-(\omega) & \triangleq
        \llbracket  \ell(s) = k \rrbracket \cdot
	  \big(1-\Phi_{k,s}^+(\omega)\big) .
\end{align}
\item 
\fbox{$<$} Less than case. Let 
\begin{align}
	\label{eq:balanceTrackLPlus}
	\Phi_{k,s}^+(\omega) & \triangleq 
	\llbracket \ell(s) = k \rrbracket
         \cdot
	 \llbracket \mathbf b(s) < \frac{j}{10} \rrbracket
	\\
	\label{eq:balanceTrackLMinus}
        \Phi_{k,s}^-(\omega) & \triangleq
        \llbracket  \ell(s) = k \rrbracket \cdot
	  \big(1-\Phi_{k,s}^+(\omega)\big) .
\end{align}
\end{itemize}
Some of these cases are illustrated in Figure \ref{fig:balance}.
\begin{figure}[htbp] 
   \centering
   \includegraphics[width=4.7in]{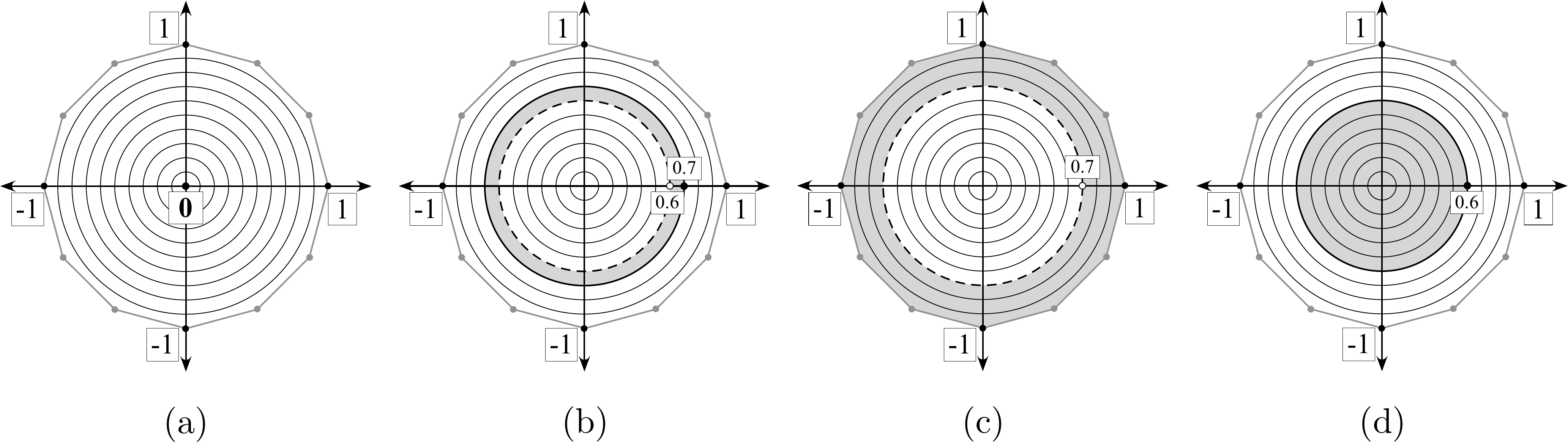} 
   \caption{The balance interaction serves to identify scales with balance within certain ranges (gray areas).
   (a) Here $j = 0$ and the case is \fbox{$=$}.
   (b) Here $j = 7$ and the case is again \fbox{$=$}.
   (c) Here $j = 7$ again, but the case is now \fbox{$>$} (or equivalently, $j=8$ and the case is \fbox{$\geq$}).
   (d) Here $j = 6$ and the case is now \fbox{$\leq$} (or equivalently, $j=7$ and the case is \fbox{$<$}).}
   \label{fig:balance}
\end{figure}

\subsubsection{Modes tracks}
\label{subsec:modesTrack}

Modes tracks are tracks that carry interactions that capture properties of the modes of the scales.
\emph{Scaletor} carries a modes track interaction $\Phi$ that
serves to obtain modes orbitals and transversals of sets of scales
(see \cite{Gomez23}), it is defined as follows. 
The regions $A\Subset \mathcal X$ on which $\Phi_{k,A}$ can be nonzero
are mode classes, that is, regions of the form $A=\{s\} \cup N_{\mathfrak M} (s)$
(note that for these regions, there exists $k = \ell(s)$ such that $A\subseteq \mathcal X^{(k)}$,
that is, $\ell(x) = \ell(y)=k$ for all $x,y \in A$). Henceforth in this subsection $A$ represents
such regions. There is going to be a parameter $j\in \{1, \ldots, n\}$
that controls the size and the number of elements of a mode class in the \textsf{On} state.
\begin{itemize}
\item
\fbox{$=$} Equal case. Let 
\begin{align}
	\label{eq:modesTrackEPlus}
	\Phi_{k,A}^+(\omega) & \triangleq 
	 \llbracket A\subseteq \mathcal X^{(k)} \rrbracket
         \cdot
	 \llbracket \#A = j\rrbracket
	\\
	\label{eq:modesTrackEMinus}
        \Phi_{k,A}^-(\omega) & \triangleq
        \llbracket A \subseteq \mathcal X^{(k)}\rrbracket \cdot
	 \llbracket \#A \neq j \rrbracket .
\end{align}
\item 
\fbox{$\geq$} Greater than or equal case. Let 
\begin{align}
	\label{eq:modesTrackGEPlus}
	\Phi_{k,A}^+(\omega) & \triangleq 
	 \llbracket A\subseteq \mathcal X^{(k)} \rrbracket
         \cdot
         \llbracket \#A \geq j \rrbracket
         \cdot
	 \llbracket \sigma_A^{\textsf{On}}(\omega) \leq j \rrbracket 
	 \\
	\label{eq:modesTrackGEMinus}
        \Phi_{k,A}^-(\omega) & \triangleq
        \llbracket A \subseteq \mathcal X^{(k)}\rrbracket \cdot
	 \llbracket \#A < j \rrbracket.
\end{align}
\item 
\fbox{$>$} Greater than case. Let 
\begin{align}
	\label{eq:modesTrackGPlus}
	\Phi_{k,A}^+(\omega) & \triangleq 
	 \llbracket A\subseteq \mathcal X^{(k)} \rrbracket
         \cdot
         \llbracket \#A \geq j +1 \rrbracket
         \cdot
	 \llbracket \sigma_A^{\textsf{On}}(\omega) \leq j +1 \rrbracket 
	 \\
	\label{eq:modesTrackGMinus}
        \Phi_{k,A}^-(\omega) & \triangleq
        \llbracket A \subseteq \mathcal X^{(k)}\rrbracket \cdot
	 \llbracket \#A \leq j \rrbracket.
\end{align}
\item 
\fbox{$\leq$} Less than or equal case. Let 
\begin{align}
	\label{eq:modesTrackLEMinus}
        \Phi_{k,A}^-(\omega) & \triangleq
        \llbracket A \subseteq \mathcal X^{(k)}\rrbracket \cdot
	 \llbracket \sigma_A^{\textsf{On}}(\omega) \geq j \rrbracket.
\end{align}
\item 
\fbox{$<$} Less than case. Let 
\begin{align}
	\label{eq:modesTrackLEMinus}
        \Phi_{k,A}^-(\omega) & \triangleq
        \llbracket A \subseteq \mathcal X^{(k)}\rrbracket \cdot
	 \llbracket \sigma_A^{\textsf{On}}(\omega) \geq j -1\rrbracket.
\end{align}
\end{itemize}
As before, the parameter $j$ and the corresponding case
is controlled with a custom component like in Figure \ref{fig:modesComponent}.
\begin{figure}[htbp] 
   \centering
   \includegraphics[width=1in]{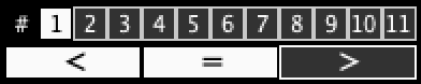} 
   \caption{Component that controls the parameter $j$ in the modes track
   and the corresponding case ($< , \leq, =  , \geq, >$).}
   \label{fig:modesComponent}
\end{figure}

\subsubsection{Interpolation tracks}
\label{subsec:interpolationTrack}

To define an interpolation interaction $\Phi$, we will use
two parameters $J^+,J^-\subseteq \{1,2,\ldots , n -1\}$.
The regions $A\Subset \mathcal X$ on which the local energy functions $\Phi_A$ can
be non-zero will be of the form
\begin{align}
  A = \left(\bigcup\limits_{j\in J^+} N_{\mathfrak I^{j+}}(s)\right)
  \cup
  \left(\bigcup\limits_{j\in J^-} N_{\mathfrak I^{j-}}(s)\right)
  \quad \textnormal{ for some } \quad s\in \mathcal X
\end{align}
and in this case we let
\begin{align}
	\label{eq:interpolTrackPlus}
	\Phi_{k,A}^+(\omega) & \triangleq 
	 \llbracket \ell(s) = k \rrbracket
         \cdot
         \llbracket \sigma_A^{\textsf{On}}(\omega) > 0 \rrbracket
	 \\
	 \label{eq:interpolTrackMinus}
         \Phi_{k,A}^-(\omega) & \triangleq
         \llbracket \ell(s) = k \rrbracket
         \cdot
         \llbracket \sigma_A^{\textsf{On}}(\omega) = 0 \rrbracket
\end{align}
Again, the module that carries the interpolation interaction
is customized with a control component
that is shown in Figure \ref{fig:interpolationControl},
it serves to choose the subsets $J^+,J^-$.
\begin{figure}[h] 
   \centering
   \includegraphics[width=1.2in]{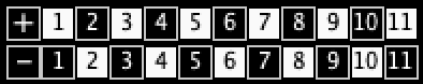} 
   \caption{In this example, $J^+$ and $J^-$ consist of the odd and even numbers, respectively. The top left button labelled $+$ is to set $J^+$ equal to either
   $\{1, \ldots , n-1 \}$ or the empty set, and similarly the bottom left button
   labelled $-$ for $J^-$.}
   \label{fig:interpolationControl}
\end{figure}

This is the last example of interaction tracks that is implemented in \emph{Scaletor}.
Before seeing examples, however, we still need to address a few more features.

\subsection{Dedicated functionalities in \emph{Scaletor}}
\label{subsec:dedicated}

We need to explain a few more functionalities in \emph{Scaletor}:
\begin{itemize}
\item
the MCMC simulation can be both started and stoped;
\item
the processes of setting subsets of restricted sites
and frozen sites in the \textsf{On} state;
\item
the capability of computing scales under the first symbol rule (see
\cite{GomezNasser21, Gomez23}).
\end{itemize}

\subsubsection{Starting and stopping the MCMC simulation machine}

\emph{Scaletor} executes a MCMC
random process that can be started and stoped at any time,
and there is a dedicated button to do this (see Figure \ref{fig:startStopGlauber}).

\begin{figure}[h] 
   \centering
   \includegraphics[width=3in]{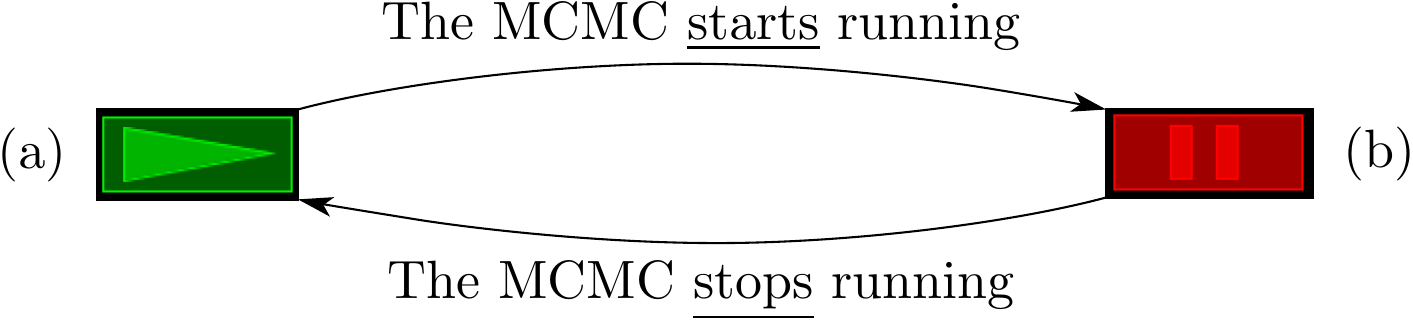} \medskip
   \caption{The two states of the \emph{start}/\emph{stop} button:
   In state (a) the MCMC is not running and there is a fixed configuration $x^{(0)} \in \Omega$.
   When the button is pressed, the button transitions to state (b)
   the MCMC \underline{starts} running
   and a sequence of configurations
   $x^{(0)}, x^{(1)}, \ldots \in \Omega$ is constantly being generated
   according to the probability law of the process that corresponds to
   the given parameters at the moment.
   When pressed again, the button transitions back to (a)
   and the MCMC \underline{stops} running with $x^{(0)} = x^{(N_0)}$, where
   $N_0$ is the index of the last configuration generated by the simulation before the button in state (b) was pressed.}
   \label{fig:startStopGlauber}
\end{figure}

When the MCMC is not running, there is a fixed (initial) configuration $x^{(0)} \in \Omega$.
On startup, the MCMC is not running and this initial configuration is, by default,
$x^{(0)}=\{\textsf{+1}\}^{\mathcal X}=\{\textsf{+1}\}^{\mathcal C_n}$,
that is, all the scales in the $n$-TET tuning system are in the \textsf{On} state.
Otherwise, if the MCMC is running, then a sequence of new configurations
is constantly being generated according to the laws of the underlying
random process at that moment, say $x^{(0)}, x^{(1)}, x^{(2)}, \ldots \in \Omega$
(perhaps the new configurations are eventually
constant, e.g. when the Gibbs measures are Dirac measures and the corresponding
MCMC starts from a configuration that eventually converge to the
configuration on which the point mass limit is based).
When the MCMC stops running, let $x^{(0)} = x^{(N_0)}$ where
$N_0$ is the index of the last configuration generated by the simulation before it stopped.

\subsubsection{Restricted subsets and frozen regions}

Recall that
\emph{Scaletor} allows to restrict to \emph{subsets} $\mathcal X \subseteq \mathcal C_n$
of sites. Also, it admits \emph{frozen} regions $\mathcal F \subseteq \mathcal X$
fixed on the positive state, $\{\textsf{+1}\}^{\mathcal F}$, so that
the final configuration space $\Omega=\Omega(\mathcal X, \mathcal F)$ is as in equation \eqref{eq:subsystems}.
On startup, $\mathcal X = \mathcal C_n$ and $\mathcal F = \varnothing$,
hence $\Omega = \{\textsf{+1},\textsf{-1}\} ^{\mathcal C_n}$
is the full space of configurations supported on all
the scales in the $n$-TET tuning system.
But we could be interested in making simulations on
special subsets $\mathcal X$ of restricted sites, e.g. on scales with $k$ pitch classes,
in which case we could consider, for example, leting $\mathcal X = \mathcal C_{n}^{(k)}$
(this scenario occurs e.g. when searchings trichords and/or pentatonic scales), etc.

As we shall see, allowing restricted subsets and frozen regions of sites 
will be useful. The components that manipulate these two properties
in the GUI of \emph{Scaletor} are shown in Figure \ref{fig:FreezeSub}.
Now we move towards describing the processes of
setting restricted subsets and frozen regions of sites.

\begin{figure}[h] 
   \centering
   \includegraphics[width=4in]{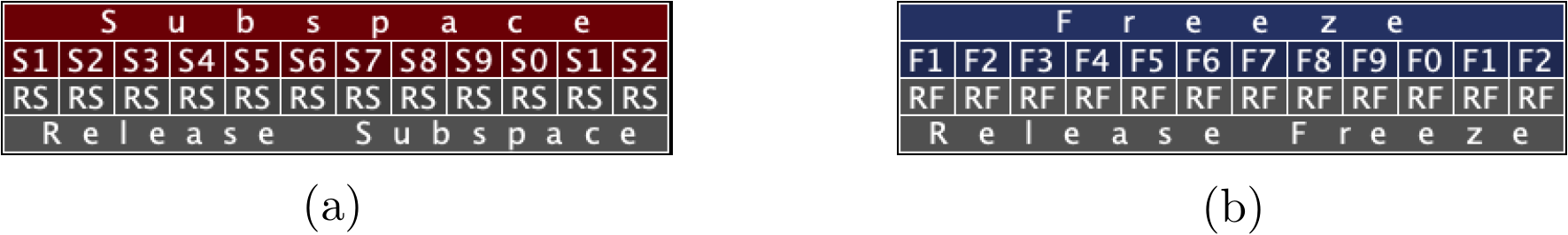}\medskip
   \caption{The two dedicated components in the GUI of \emph{Scaletor}
   to set (a) restricted subsets of sites and (b) frozen regions in specific configurations.}
   \label{fig:FreezeSub}
\end{figure}

\subsubsection{Restricting and releasing restricted subsets of sites}

Restricting sites can be done with the 
\fbox{\colorbox{subspace}{\textsf{\tiny \color{white}Subspace}}} button
(see (a) in Figure \ref{fig:FreezeSub}, it is at the top).
This button can be pressed any time any number of times,
whenever the MCMC simulation is running or not.
It can be pressed even when an (initial)
restricted set of sites
$\mathcal X = \mathcal X^{(0)} \subseteq \mathcal C_{n}$
and/or a frozen region determined by a subset $\mathcal F \subseteq \mathcal X$
have already been established (recall that $\mathcal X = \mathcal C_{n}$
and $\mathcal F=\varnothing$ on startup).
When the \fbox{\colorbox{subspace}{\textsf{\tiny \color{white}Subspace}}} button is pressed,
the following inductive algorithm is executed:
\begin{enumerate}
	\item[0.]
	Start with the current (initial) set of restricted sites
	$\mathcal X^{(0)} = \mathcal X$ and the current
	(initial) configuration $x^{(0)} \in \Omega(\mathcal X^{(0)}, \mathcal F)$.
	\item
	Suppose that $\mathcal X^{(j)}$ and $x^{(j)}$ are given.
	If the MCMC is running, then take one step forward in such a random process
	and get a new configuration
	$x^{(j+1)} \in \Omega \big(\mathcal X^{(j)}, \mathcal F\big)$
	Otherwise, let $x^{(j+1)}\triangleq x^{(j)}$ if the MCMC is not running.
	\item
	Let
	$\mathcal X^{(j+1)} = \{ s \in \mathcal X^{(j)} \, : \, x^{(j+1)}_s = \textsf{+1}\}$.
	Observe that $\mathcal F \subseteq \mathcal X^{(j+1)} \subseteq \mathcal X^{(j)}$.
	\item
	Repeat steps 1 and 2 a finite but arbitrarily large number
	$M_0 \in \mathbb N$ of times,
	with $\mathcal X^{(j+1)}$
	and $x^{(j+1)}\mid_{\mathcal X^{(j+1)}}$
	playing the roles of $\mathcal X^{(j)}$ and $x^{(j)}$, respectively.
\end{enumerate}
If all the parameters are kept fixed while the subspace algorithm is being executed,
then we will get a decreasing random sequence of subsets of sites
$\mathcal X^{(0)} \supseteq \mathcal X^{(1)} \supseteq \ldots$ that
must eventually stabilize, so that if $M_0$ is big enough, then we will have
\begin{align}
\label{eq:speed}
	\mathcal X \triangleq \mathcal X^{(M_0)}=\mathcal X^{(M_0+k)} \, \, \forall k\geq 0.
\end{align}
Observe that when stability is reached after $M_0$ or more iterations,
we will always have $x^{(M_0+k)} = \{\textsf{+1}\}^{\mathcal X}$.
The speed of stabilization, i.e. how big $M_0$ must be in order for equation
\eqref{eq:speed} to hold, depends on
the interactions and the values of their parameters. In \emph{Scaletor}, the default
value for the number of iterations is $M_0=20$, which is
usually enough (press the button several times if necessary).
After pressing this button, in order to avoid awkward outputs,
do not press other buttons for a couple of seconds and let all the iterations finish. 
For example, if the MCMC is running and the interaction $\Phi$ is \emph{trivial}
(i.e. when the mix is empty, in which case the random field consists of
independent \textsf{Bernoulli(1/2)}
random variables), then as a result of pressing the
\fbox{\colorbox{subspace}{\textsf{\tiny \color{white}Subspace}}}
button
(several times if necessary --but rather unlikely--),
we will surely obtain $\mathcal X = \varnothing$.
Thus, in a way, pressing the
\fbox{\colorbox{subspace}{\textsf{\tiny \color{white}Subspace}}}
button has an effect similar to that of a ``stochastic intersection operator'', so to speak.

Below the \fbox{\colorbox{subspace}{\textsf{\tiny \color{white}Subspace}}} button
there are $n$ 
\fbox{\colorbox{subspace}{\textsf{\tiny \color{white}S$k$}}} buttons, with $k =\textsf{1}, \textsf{2}, \ldots , n$
(again see (a) in Figure \ref{fig:FreezeSub}).
These buttons act exactly like the
\fbox{\colorbox{subspace}{\textsf{\tiny \color{white}Subspace}}}
button but they are \emph{local} in the sense that they only affect sites of length $k$.

Below the \fbox{\colorbox{subspace}{\textsf{\tiny \color{white}Subspace}}}
and \fbox{\colorbox{subspace}{\textsf{\tiny \color{white}S$k$}}}
buttons described above, there are the corresponding
\fbox{\colorbox{gray}{\textsf{\tiny \color{white}Release Subspace}}} and $n$
 \fbox{\colorbox{gray}{\textsf{\tiny \color{white}RS}}} buttons,
and they are precisely meant to \emph{release} restricted subset of sites.
There is one \emph{large} \fbox{\colorbox{gray}{\textsf{\tiny \color{white}Release Subspace}}}
button and also there are $n$ \emph{local}  
buttons, the \fbox{\colorbox{gray}{\textsf{\tiny \color{white}RS}}} buttons.
On startup, all these release subspace buttons are gray, as shown
in item (a) of Figure \ref{fig:FreezeSub},
and they turn dark red whenever there are restricted subsets of sites
that can be released with these buttons.
Pressing the \fbox{\colorbox{subspace}{\textsf{\tiny \color{white}Release Subspace}}}
button sets $\mathcal X = \mathcal C_{n}$ (hence the button turns gray).
Similarly, if $\mathcal X$ is the set of restricted sites before pressing the local
\fbox{\colorbox{subspace}{\textsf{\tiny \color{white}RS}}} button below the
\fbox{\colorbox{subspace}{\textsf{\tiny \color{white}S$k$}}} button, then, after pressing
such a \emph{local} releasing button, 
the new set of restricted sites is 
$\mathcal X \cup \mathcal C_{n}^{(k)}$ (and thus, similarly, the button turns gray).

\subsubsection{Freezing and releasing frozen regions}
Freezing specific sites in the positive state \textsf{On} 
can be done with the large
\fbox{\colorbox{freeze}{\textsf{\tiny \color{white}Freeze}}}
button (in blue) at the top 
(see again Figure \ref{fig:FreezeSub}).
It acts in a very similar way as the Subspace buttons described before.
It can be activated any time, even when a subset of (initial) frozen sites
$\mathcal F = \mathcal F^{(0)} \subseteq \mathcal X$
has already been established (recall again that
on startup, $\mathcal F = \varnothing$).
When the \fbox{\colorbox{freeze}{\textsf{\tiny \color{white}Freeze}}}
button is pressed, the following inductive algorithm is executed:
\begin{enumerate}
	\item[0.]
	Start with the current (initial) configuration
	$x^{(0)} \in \Omega(\mathcal X,\mathcal F^{(0)})$:
	\item
	Suppose that $\mathcal F^{(j)}$ and $x^{(j)}$ are given.
	If the MCMC is running, then take one step
	forward in such a random process,
	and, as a result, get a new configuration
	$x^{(j+1)} \in \Omega = \Omega \big(\mathcal X, \mathcal F ^{(j)}\big)$.
	Otherwise, let
	$x^{(j+1)}\triangleq x^{(j)}$.
	\item
	Let $\mathcal F^{(j+1)} = \{ s \in \mathcal X \, : \, x^{(j+1)}_s = \textsf{+1} \}$.
	Observe that $\mathcal F^{(j+1)} \supseteq \mathcal F^{(j)}$.
	\item
	Repeat steps 1 and 2
	a finite but arbitrarily large number $M_0 \in \mathbb N$ of times,
	with $\mathcal F^{(j+1)}$
	and $x^{(j+1)}$
	playing the roles of $\mathcal F^{(j)}$ and $x^{(j)}$, respectively.
\end{enumerate}
If all the parameters are kept fixed while the freezing algorithm is being executed,
then we will get an increasing random sequence of subsets of sites
$\mathcal F^{(0)} \subseteq \mathcal F^{(1)} \subseteq \ldots$
that must stabilize, and again the default value $M_0=20$ is
enough for practical purposes. For example,
if the MCMC is running and the interaction
$\Phi$ is trivial (i.e the random field is formed by
i.i.d. \textnormal{Bernoulli}$(1/2)$ random variables), then, as a result of pressing the
\fbox{\colorbox{freeze}{\textsf{\tiny \color{white}Freeze}}} button
(several times if really necessary --a rather unlikely event--),
we will surely get $\mathcal F = \mathcal X$, i.e.
the frozen region will consist of all the scales in the given restricted
subset of sites $\mathcal X$.
Thus, in a way, pressing the
\fbox{\colorbox{freeze}{\textsf{\tiny \color{white}Freeze}}}
button has an effect similar to that of a ``stochastic union operator'', so to speak.

Below the large \fbox{\colorbox{freeze}{\textsf{\tiny \color{white}Freeze}}}
button there are $n$ 
\fbox{\colorbox{freeze}{\textsf{\tiny \color{white}F$k$}}} buttons, with $k =1, 2, \ldots , n$.
These buttons act exactly like the
\fbox{\colorbox{freeze}{\textsf{\tiny \color{white}Freeze}}}
button but their actions are restricted to sites of $\mathcal X$ of length $k$.

Also, just like the buttons to Release Subspaces of restricted sites,
below the blue freezing buttons described above, there are the corresponding
buttons that release, or \emph{unfreeze}, frozen sites. Similarly again, there are
one large and $n$ local buttons to unfreeze frozen sites.
On startup they are all gray and they turn blue
whenever there are frozen sites that can be unfrozen by these buttons.
The large \fbox{\colorbox{gray}{\textsf{\tiny \color{white}Release Freeze}}}
button makes $\mathcal F = \varnothing$ when pressed, i.e. it
unfreezes any frozen site in $\mathcal X$. Similarly,
the local \fbox{\colorbox{gray}{\textsf{\tiny \color{white}RF}}} buttons
below each
\fbox{\colorbox{freeze}{\textsf{\tiny \color{white}F$k$}}}
button with $k =$ \textsf{1}, \textsf{2}, $\ldots , n$, when pressed, it
unfreezes any frozen site in $\mathcal X \cap \mathcal C_{n}^{(k)}$,
i.e. if $\mathcal F$ is the set of frozen sites before pressing such a local button,
then, after pressing it, 
the new set of frozen sites is $\mathcal F \setminus \mathcal C_{n}^{(k)}$
(the release buttons are also gray, and they turn dark blue when
there are frozen sites that can be released).

In addition, in \emph{Scaletor} there is always a selected scale,
and pressing the \fbox{\textsf{F}} key in the keyboard Freezes the selected
scale in the \textsf{On} state, that is, given $\mathcal X$, $\mathcal F$, and $x\in \mathcal X$, then,
as a result of pressing the \fbox{\textsf{F}} key in the keyboard,
we get the new frozen region $\mathcal F \cup \{x\}$ in the \textsf{On} state.
This will be useful. Also, changing the selected scale can be
done either by clicking on a scale in the bell of scales or with the arrow keys
(\fbox{$\uparrow$}, \fbox{$\rightarrow$}, \fbox{$\downarrow$}, \fbox{$\leftarrow$})
that move a (green) cursor along the scales in the \textsf{On} state.

\subsubsection{First symbol rule}
\label{subsec:fsr}
In \cite{Gomez23, Gomez21, GomezNasser21} a method to produce
scales from a symbolic sequences was introduced and studied.
In \emph{Scaletor} there is a dedicated button that does this job.
More precisely, whenever the
\fbox{\colorbox{fsr}{\textsf{\tiny \color{white}First Symbol Rule}}} button is pressed,
\emph{Scaletor} reads the text file\footnote{This file \emph{must} have this name by default, and it most be saved in the same folder in which \emph{Scaletor} is located.}
\texttt{sequence.txt} as a finite sequence, or word,
of length $N\gg 0$ over an alphabet $\mathcal A$,
say $w\in \mathcal A^N$, and returns the corresponding set of scales
that results from the sequence under the \emph{first symbol rule}:

\begin{itemize}
\item \textsc{First Symbol Rule.}
Let $w =w_1\ldots w_{N}\in \mathcal A^N$. For each $k_0 \geq 1$ such that
$k_0 \leq N - n + 1$, consider the factor $w_{[k_0,k_0+n)} \triangleq w_{k_0} \ldots w_{k_0+n-1}$.
Let $\mathfrak s \triangleq w_{k_0} \in \mathcal A$ and then recursively define
$k_m = \min \{ k>k_{m-1} \ : \ w_{k_m}=\mathfrak s\}$
for all $m\geq 1$ (eventually $k_m=\infty$, when there is no $k$ that satisfies the condition).
If $k_1 - k_0 \geq n$, then the scale that the factor $w_{[k_0,k_0+n)}$ induces
under the first symbol rule is
$(n)$. Otherwise, if $l \geq 1$ is the greatest index such $k_l < k_0+n$, then the
scale that the factor $w_{[k_0,k_0+n)}$ induces under the first symbol rule is
\begin{align}
(k_1-k_0, \ldots, k_{l} - k_{l-1}, n + k_0 - k_l).
\end{align}
\end{itemize}
When the \fbox{\colorbox{fsr}{\textsf{\tiny \color{white}First Symbol Rule}}}
is pressed, the resulting scales will be in the \textsc{On} state and all other
scales will be in the \textsf{Off} state, also the MCMC automatically stops running.
Furthermore, only scales in restricted subsets of sites are considered, and
if there are frozen region of sites in the \textsf{On} state, they will remain
\textsf{On}, regardless of whether or not they are generated by the given
sequence through the first symbol rule.

\subsection{Glauber dynamics in \emph{Scaletor}}
\label{subsec:GlauberScaletor}

The simulation in \emph{Scaletor} is essentially the same as Glauber dynamics in the Ising model
as described in subsection \ref{subsec:GlauberMCMC}.
To be precise, the following three steps refer to the three
steps described in the aforementioned subsection \ref{subsec:GlauberMCMC}.
\begin{enumerate}
\item We start with step \ref{GD1} and its equation \eqref{eq:local0},
but certainly we use the hamiltonian for the \emph{Scaletor} model instead
of the hamiltonian for the Ising model that yielded equation \eqref{eq:local1}).
\item
Step \ref{GD2} is also carried out similarly, with its equations
\eqref{eq:gd21} and \eqref{eq:gd22}.
\item
The final step \ref{GD3} is identical.
\end{enumerate}

Furthermore, \emph{Scaletor} possesses an additional simulation.

\subsubsection{External magnetic fields}

There is a useful alternative simulation that turns frozen regions of sites
into external magnetic fields sort of speak. This has effect only on the interactions
that involve the definition of $\sigma_A^{\textsf{On}}$ (see equation \eqref{eq:magneticWeight}). 
When the external magnetic field has been activated, \emph{Scaletor} uses instead
\begin{align}
	\sigma^{\textsf{On}}_A(\omega) \triangleq
	\sum_{y \in A\cap \mathcal F} \llbracket \omega_y = \textsf{+1} \rrbracket 
	= \mid A\cap \mathcal F \mid.
\end{align}
To activate or deactivate the external magnetic field, press the \fbox{\tiny \textsf{M}} key
(on startup the magnetic field is deactivated).

\section{Examples}
\label{sec:examples}

Let us start with examples that have
one single track that is turned \fbox{\colorbox{OliveGreen}{\textsf{\tiny \color{white}On}}}
and that carries single site interactions.
First we look at interaction tracks from subsection \ref{subsec:intervalTracks}.

\begin{example}
\label{ex:basic5}
{\textsc{Single interval track.}}
Consider the interactions mix console that 
consists of one single track, i.e. there is only one track in the \fbox{\colorbox{OliveGreen}{\textsf{\tiny \color{white}On}}} position,
and it carries the intervalic interaction $\Phi^{(k)} = \Phi$ obtained from equation \eqref{eq:moduleInteraction}
using the interaction components defined by equations \eqref{eq:intervalicPlus} and
\eqref{eq:equalityMinus}.
For a mix where $\delta \neq 0$ (i.e. the master volume is \fbox{\colorbox{OliveGreen}{\textsf{\tiny \color{white}On}}}),
$\beta \gg 0$ (i.e. bring the slider of the master volume all the way up),
and $\beta_k, \beta_k^{+}, \beta_k^{-},\beta_k^{(1)}, \ldots , \beta_k^{(n)} \gg 0$
(i.e. bring the sliders of both the three volumes of the track and the equalizer all the way up),
if the MCMC machine is running, then, for example, the following holds:
\begin{itemize}
\small
\item
In the equality case, the Gibbs measure is the point mass
measure based on the configuration of scales
in $\Omega$ that has in the \textsf{On} state precisely the set of scales in $\Omega$ with exactly
$j$ occurrences of an $m$-interval. For example, in the 12-TET tuning system,
if $m=2$ and $j = 5$, then we obtain the configuration $\omega \in \Omega$
that has in the \textsf{On} position precisely the 21 scales with exactly
five whole tones:  the Ionian scale $(2,2,1,2,2,2,1)$ and its modes (Locrian, Phrygian, Aeolian, Dorian, Mixolydian and Lydian, see again the left part in Figure \ref{fig:scale}),
and another two mode orbitals of size 7,
one is from the Melodic Minor scale $(2,1,2,2,2,2,1)$
and the other one from the Neapolitan scale $(1,2,2,2,2,2,1)$
(see Figure \ref{fig:MelodicMinorNeapolitan}).
\begin{figure}[htbp] 
   \centering
   \includegraphics[width=4.7in]{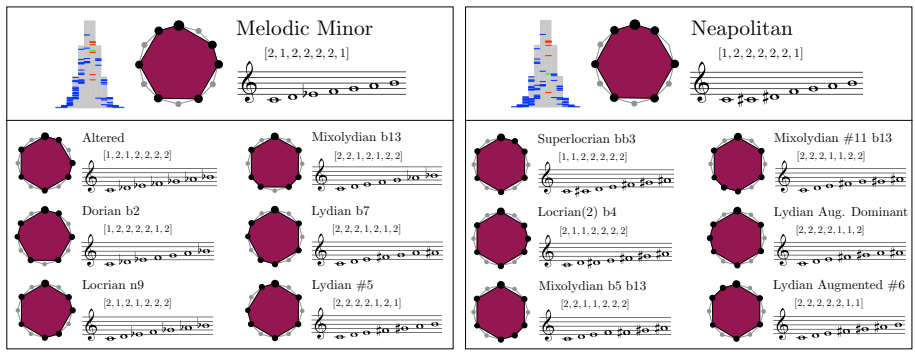} 
   \caption{The orbitals of the Melodic Minor scale and the Neapolitan scale. Together with
   the orbital of the Ionian scale (see the left part in Figure \ref{fig:scale}), we obtain all the scales with exactly 5 second majors formed with consecutive pitch classes.}
   \label{fig:MelodicMinorNeapolitan}
\end{figure}
\item
In the greater than or equal case, again with $m=2$ and $j = 5$,
we get 22 scales with at least five whole tones: the previous 21 from the equality case (length 7),
and the Hexatonic scale $(2,2,2,2,2,2)$
(it has length 6, it corresponds to the chromatic scale in the $6$-TET tuning system,
it is shown in Figure \ref{fig:ModesOne}).
\end{itemize}
\end{example}

Next, the balance interaction from subsection \ref{subsec:balance} is also a single site interaction.

\begin{example}
\label{ex:balanceSingle}
{\textsc{Single balance track.}}
Consider the interactions mix console that again
consists of one single track
and it carries the balance interaction $\Phi^{(k)} = \Phi$ described in subsection
\ref{subsec:balance} (e.g. for the equality case, $\Phi^{(k)}$ is defined through equations \eqref{eq:balanceTrackEPlus} and \eqref{eq:balanceTrackEMinus}).
Again let $\delta \neq 0$, $\beta \gg 0$, and
$\beta_k, \beta_k^{+}, \beta_k^{-},\beta_k^{(1)}, \ldots , \beta_k^{(n)} \gg 0$.
If the MCMC machine is running, then, for the equality case, the Gibbs measure
is a Dirac measure supported on the set of scales that satisfy $b(s) = 0$ if $j=0$,
and $\frac{j-1}{10} < b(s) \leq \frac{j}{10}$ if $j>0$. In Figure \ref{fig:BalanceHistogram}
the dark gray bars form the corresponding histogram of scales according to their
balance that is partition first for perfect balance and then 
through these ten disjoint uniform consecutive intervals
$(0,\frac{1}{10}], (\frac{1}{10}, \frac{2}{10}],\ldots, (\frac{9}{10}, 1]$.
The light gray bars form the histogram modulo modes (e.g. there are
only 18 different shapes formed from a subset of a regular 12-gon
with perfect balance, see Figure \ref{fig:PerfectBalance}).
\begin{figure}[htbp] 
   \centering
   \includegraphics[width=4.7in]{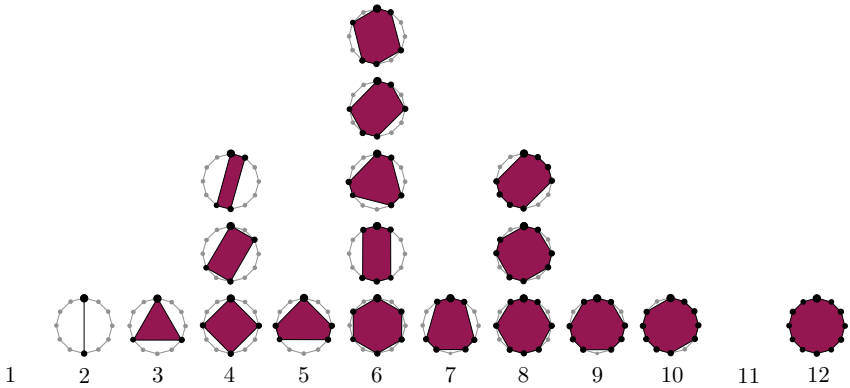} 
   \caption{The 18 shapes with perfect balance for the 12-TET tuning system.}
   \label{fig:PerfectBalance}
\end{figure}
\begin{figure}[h] 
   \centering
   \includegraphics[width=4.7in]{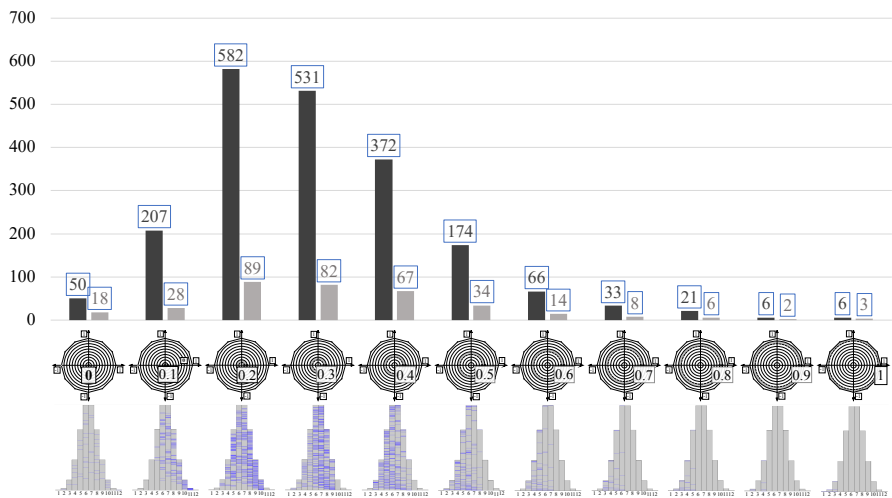} 
   \caption{The dark gray bars correspond to the set of scales that satisfy
   $b(s) = 0$ if $j=0$, and $\frac{j-1}{10} < b(s) \leq \frac{j}{10}$ if $j=1, \ldots , 10$.
   For example, there are exactly 50 scales with perfect balance.
   The light gray bars correspond to the modes classes. At the bottom we see their
   corresponding location in the bell of scales (the blue bars). The histogram with
   light gray bars is for the mode classes, e.g. a modes transversal of
   the set of scales with perfect balance always has cardinality 18.
   }
   \label{fig:BalanceHistogram}
\end{figure}
\end{example}

In the next three examples we explore some basic parameter mixing and
features like the \fbox{\colorbox{subspace}{\textsf{\tiny \color{white}Subspace}}}
and \fbox{\colorbox{freeze}{\textsf{\tiny \color{white}Freeze}}} buttons.

\begin{example}
\label{ex:basic5Comp}
\textsc{Complements.}
Start from the Example \ref{ex:basic5}, with $m=2$ and $j = 5$.
\begin{itemize}
\small
\item
Now let $-\beta \gg 0$ (i.e. bring the slider of the master volume from all the way up to all the way down). Then
we get the complement, that is, in the equality case we get the scales that do
not have exactly five whole tones, and in the greater than or equal case we get
the scales that have less than five whole tones.
\end{itemize}
The same occurs of course if instead
we let either one of the following:
\begin{itemize}
\small
  \item $-\beta_k \gg 0$;
  \item both $-\beta_k^{+}  \gg 0$ and $ -\beta_k^{-} \gg 0$;
  \item $-\beta_k^{(j)}  \gg 0$ for all $j=1,\ldots , n$.
\end{itemize}
Furthermore, the following occurs:
\begin{itemize}
\small
\item
Again if we start from all the sliders all the way up but now we let
$-\beta_k^{(j)}  \gg 0$ only for some $j\in\{1,\ldots , n\}$, then, in the
$j$th column of the bell of scales we will see the complement
restricted to compositions of $n$ of length $j$.
\end{itemize}
Lastly, consider the following:
\begin{itemize}
\small
\item
In the less than case, again if we start from all the sliders all the way up
but now we let $-\beta \gg 0$ (or anyone of the previous items), then
we get the configuration of scales in the second item of example \ref{ex:basic5}.
\end{itemize}
Similar complements can be obtain with the balance track from example \ref{ex:balanceSingle}.
\end{example}

\begin{example}
\label{ex:Subspace}
\textsc{The} \fbox{\colorbox{subspace}{\tiny \color{white}\textsf{Subspace}}} \textsc{button
and restricted subsets of sites.}
Consider again Example \ref{ex:basic5}, but now let $\beta_1^{-} = 0$.
In this case the Gibbs measure is no longer a Dirac measure. Nevertheless,
the preferred configurations are those that always have in the \textsf{On}
state the scales with exactly (in the equality case) 5 whole tones, but not
satisfying this property is not penalized because $\beta_1^{-} = 0$,
so the rest of the scales are distributed like \textsf{Bernoulli}(1/2). In this case we can
observe how the \fbox{\colorbox{subspace}{\textsf{\tiny \color{white}Subspace}}}
button works while the MCMC machine is running:
if pressed, the decreasing sequence of restricted subset of sites
$\mathcal X^{(j)}$ will converge to the set of scales with exactly
(or with at least, in the less than case) 5 whole tones, that is,
to the set of scales in the \textsf{On}
position in the configuration $\omega$ from example \ref{ex:basic5}.
\end{example}

\begin{example}
\label{ex:Freeze}
\textsc{The} \fbox{\colorbox{freeze}{\textsf{\tiny \color{white}Freeze}}} \textsc{button
and frozen regions in the \textsf{On} state.}
Consider again Example \ref{ex:basic5}, again let $\beta_1^{-} = 0$
as in Example \ref{ex:Subspace}, but now also let $-\beta \gg 0$.
Again in this case the Gibbs measure is not a Dirac measure. 
The preferred configurations are those that always have in the \textsf{Off}
state the scales with exactly (or with at least) 5 whole tones, and each of the rest
of the scales are distributed like \textsf{Bernoulli}(1/2). So now we can see how
the \fbox{\colorbox{freeze}{\textsf{\tiny \color{white}Freeze}}} button works while the MCMC is running: if pressed,
then the increasing sequence of frozen regions $\mathcal F^{(j)}$
will converge to the complement described in Example \ref{ex:basic5Comp},
that is to the set of scales in the \textsf{Off}
position in the configuration $\omega$ from Example \ref{ex:basic5}.
\end{example}

Now we continue with the other interactions we have defined,
they are not single site but ‘‘single region'' interactions sort of speak
(see equation \eqref{eq:local0}). Let us start with the modes.

\begin{example}
\label{ex:modesSingle}
{\textsc{Single modes track.}}
Consider the interactions mix console that carries the modes interaction
$\Phi$ defined in subsection \ref{subsec:modesTrack}. Again turn it
\fbox{\colorbox{OliveGreen}{\textsf{\tiny \color{white}On}}} together with the master volume
and set all parameters $\gg 0$ by bringing all the corresponding sliders
all the way up. The following are examples of what can you get:
\begin{itemize}
\small
\item
In the equality case (equations \eqref{eq:modesTrackEPlus}
and \eqref{eq:modesTrackEPlus}), we get Dirac measures for each
$j = 1,\ldots,n$. The support is precisely the scales with exactly $j$ modes.
For example, if $j=1$ then we get the 6 scales with exactly one mode: $(12)$, $(6,6)$,
$(4,4,4)$, $(3,3,3,3)$, $(2,2,2,2,2,2)$, and $(1,1,1,1,1,1,1,1,1,1,1,1)$,
that is, the chromatic scales in the $k$-TET tuning systems for
all $k \mid n$ (see Figure \ref{fig:ModesOne}).
\begin{figure}[htbp] 
   \centering
   \includegraphics[width=4.7in]{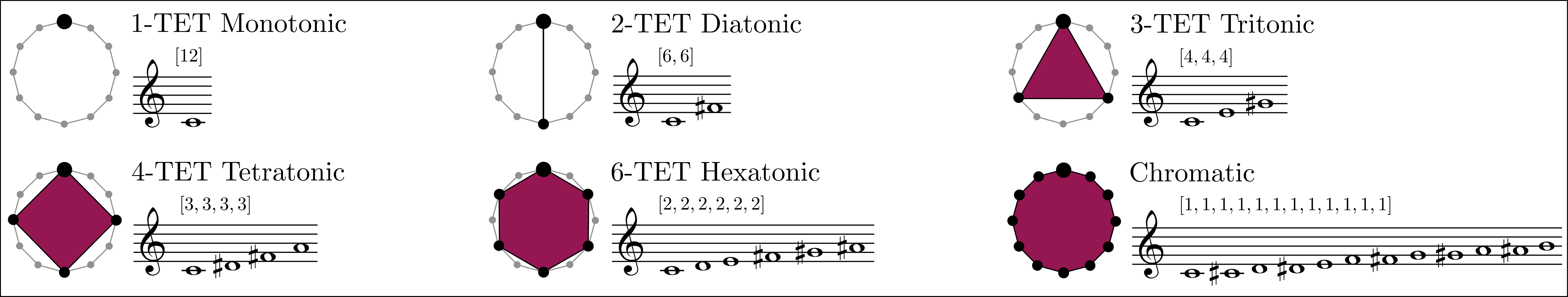} 
   \caption{The six scales with exactly one mode. All but the first have perfect balance.}
   \label{fig:ModesOne}
\end{figure}
\item
Another instance of the equality case for example, we know that
there are $462$ scales of length $6$ (e.g. recall Figure \ref{fig:gauss}),
but if we let $j=6$ we see that there are $450$ scales with exactly
$6$ modes, hence there are $12$ scales of length $6$ that have
less than six modes, so they must have $j$ modes only for $j=1,2,3$,
and we have quick access with the modes track. Let $j=3$ and get
the orbitals of three scales: the Messiaen V scale $(1,1,4,1,1,4)$,
the Messiaen II Truncated scale $(1,2,3,1,2,3)$ and the Raga Indupriya India scale
$(1,3,2,1,3,2)$, for $j=2$ we get the orbital of the Raga Vasanta Pentachord 5aug scale $(1,3,1,3,1,3)$,
and certainly for $j=1$ we get the 6-TET Hexatonic scale $(2,2,2,2,2,2)$.
See Figure \ref{fig:ModesLess6}.
\begin{figure}[htbp] 
   \centering
   \includegraphics[width=4.7in]{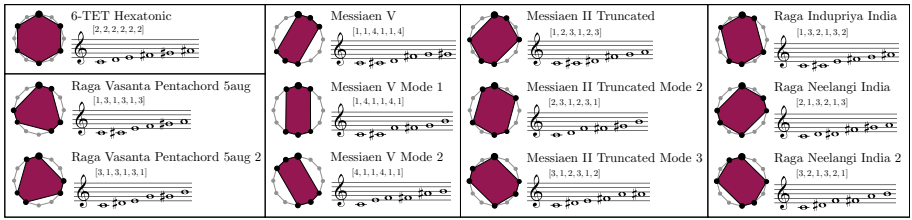} 
   \caption{Orbitals of scales of length 6 with less than 6 elements.}
   \label{fig:ModesLess6}
\end{figure}
\item
\label{eq:modesTransversal}
In the less than or equal case, the orbital of a scale is allowed to have
at most one element in the \textsf{On} state, so the Gibbs measure is not a
Dirac measure, configurations in which there are two or more elements of
an orbital in the \textsf{On} state never occur. Pressing the
the \fbox{\colorbox{freeze}{\textsf{\tiny \color{white}Freeze}}} button will yield a
modes transversal of the scales (see again Figure \ref{fig:gauss},
the histogram with light gray bars), the modes transversal dimension
of the set of scales is 351.
\end{itemize}
\end{example}

For examples of our last interaction, the interpolation interaction, recall
that a \emph{kernel}
in a graph $G=(V,E)$ is a subset $K\subseteq V$
that satisfies two properties:
(1) Independence: for every $u,v\in V$, $(u,v)\notin E$;
(2) Absorbance: for every $u\in V\setminus K$,
there exists $v\in K$ such that $(u,v)\in E$.
In other words, a kernel is a maximal independent set.

\begin{example}
\label{ex:interpolationSingle}
{\textsc{Single interpolation track.}}
Consider the interactions mix console that again
consists of one single track that is turned \fbox{\colorbox{OliveGreen}{\textsf{\tiny \color{white}On}}}
and that carries the interpolation track $\Phi^{(k)} = \Phi$ described
in subsection \ref{subsec:interpolationTrack}, that is, $\Phi$ is defined through
equations \eqref{eq:interpolTrackPlus} and \eqref{eq:interpolTrackMinus}.
\begin{itemize}
\small
\item
On startup, the Ionian scale is selected and the MCMC is not running.
Freeze the selected scale by pressing the \fbox{\textsf{\tiny F}} key.
Next, press the \fbox{\textsf{\tiny +}} and \fbox{\textsf{\tiny -}} buttons to set
$J^+,J^-= \{1,2,3,\ldots\}$ and activate the external magnetic field by pressing
the \fbox{\textsf{\tiny M}} key once (a notification that the external magnetic field
is activated is shown in the GUI). Start the MCMC machine and observe how the
frozen Ionian scale transmits its energy to its neighbors for the configurations
converge to the Ionian scale together with the set of scales that interpolate the
Ionian scale in any of the $\mathfrak I^{k\pm}$-step interpolation networks.
\item
Here is another way to obtain the configuration of the previous item (without the external magnetic field).
Again after startup, Freeze the selected scale (Ionian).
Start the MCMC machine and press the \fbox{\colorbox{subspace}{\textsf{\tiny \color{white} S7}}}
button, that is, run the Subspace algorithm on the 7th column of the bell of scales
that consists of an i.i.d. \textsf{Bernoulli}$(1/2)$ random field except of the selected scale
(e.g. Ionian) that is frozen in the \textsf{On} state. As a result we get
$\mathcal X = \mathcal C_1\cup\cdots \cup\mathcal C_6
\cup\{\textnormal{Ionian}\}\cup
\mathcal C_8\cdots \cup \mathcal C_{12}$.
Set $J^+= \{1\}$ and then, using the equalizer,
let the interpolation track have effect only on scales of length $\leq 6$.
The generic configurations will have scales of length less that 6
in the \textsf{On} state whenever they $j^+$-step interpolate the Ionian scale.
Press the buttons \fbox{\colorbox{subspace}{\textsf{\tiny \color{white}Sj}}} and
\fbox{\colorbox{freeze}{\textsf{\tiny \color{white} Fj}}} for $j=1,\ldots , 6$ to set
$\mathcal X = N_{\mathfrak I^{6+}}(\textnormal{Ionian}) \cup \cdots \cup 
N_{\mathfrak I^{1+}}(\textnormal{Ionian}) \cup\{\textnormal{Ionian}\}\cup
\mathcal C_8\cdots \cup \mathcal C_{12}$
and $\mathcal F = N_{\mathfrak I^{6+}}(\textnormal{Ionian}) \cup \cdots \cup 
N_{\mathfrak I^{1+}}(\textnormal{Ionian}) \cup\{\textnormal{Ionian}\}$.
Similarly, now let $J^+= \varnothing$ and $J^- = \{ 1 \}$ and then
let the interpolation track have effect only on scales of length $> 6$.
\item
Set $J^+,J^-= \{1\}$ (hence the underlying network is the
(undirected) interpolation network $\mathfrak I$. If all the volumes are all the way up, then
a scale will tend to be in the \textsf{On} state if it interpolates
with another scale that is already in the \textsf{On} state, otherwise
it will tend to be in the \textsf{Off} state. Thus, under this mix, starting from a
generic random configuration, the MCMC machine will surely converge to
the configuration with all the scales in the \textsf{On} state.

But now let us bring the main volume $-\beta \gg 0$ all the way down.
In this case we do not get the complement (i.e. the configuration with all the scales in the \textsf {Off}
state). Instead, we get configurations of scales with the property that the set of scales in
the \textsf{On} state form a maximal independent set in $\mathfrak I$. In other words, we obtain
kernels in the hypercube of dimension 11. In Figure \ref{fig:HistogramKernelHyper}
we show a histogram for the cardinalities of a random sample of kernels obtained with \emph{Scaletor}.

\begin{figure}[htbp] 
   \centering
   \includegraphics[width=4.7in]{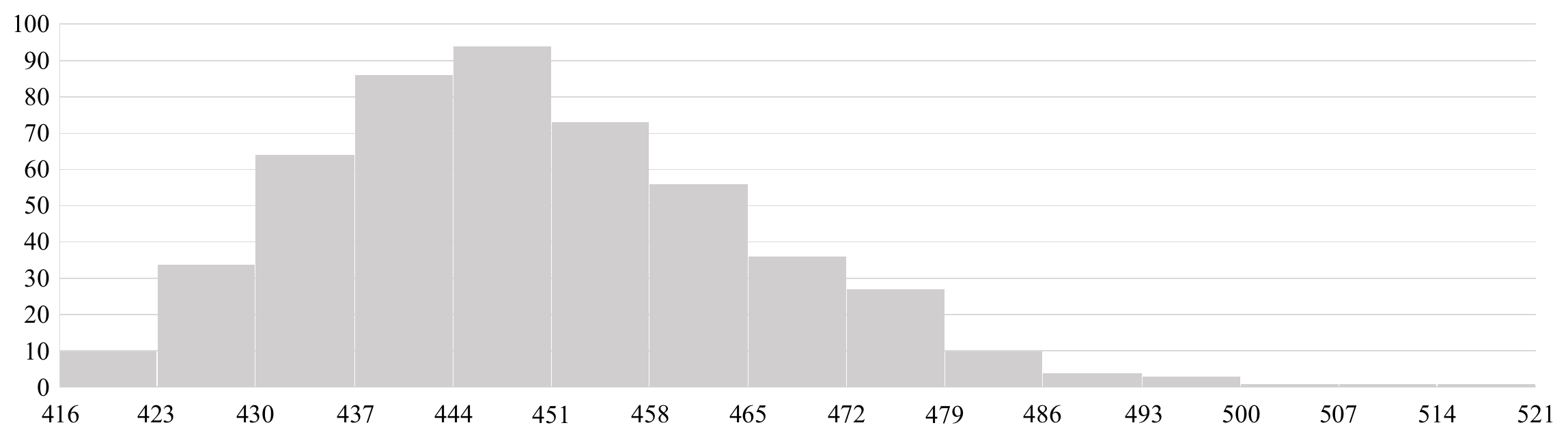} 
   \caption{A histogram of the cardinalities of a random sample of
   kernels in the hypercube of dimension 11,
   produced by the interpolation track mixed as in Example
   \ref{ex:interpolationSingle}
   (the size of the sample is $500$).}
   \label{fig:HistogramKernelHyper}
\end{figure}
\end{itemize}
\end{example}

\begin{example}
\label{ex:firstSymbolRule}
{\textsc{First symbol rule.}}
With \emph{Scaletor} we can verify results from
\cite{GomezNasser21, Gomez21,Gomez23} for example.
Indeed, recall subsection \ref{subsec:fsr}. The default content in the
file \texttt{sequence.txt} is an initial segment of the well known Thue-Morse sequence.
So if the
\fbox{\colorbox{fsr}{\textsf{\tiny \color{white}First Symbol Rule}}}
button is pressed, then we obtain all the 18 scales
reported in \cite{GomezNasser21}. Freeze them by pressing the
\fbox{\colorbox{freeze}{\textsf{\tiny \color{white}Freeze}}} button.
Activate the magnetic field by pressing the key  \fbox{\textsf{\tiny M}} in the keyboard.
Turn \fbox{\colorbox{OliveGreen}{\textsf{\tiny \color{white}On}}} the modes track
with $j=1$ and the $\geq$ case, and set all the parameters $\gg 0$
to obtain the modes orbitals of the
Thue-Morse scales (and there are indeed 49 scales in total). The same can be done for other sequences
like the ones that result from the Fibonacci and Feigenbaum substitutions
and thus we can readily verify all the claims in the aforementioned
references. (Try finding scales arising from the digits of
$\pi\triangleq 3.14159\ldots$.)
\end{example}

\begin{example}
\label{ex:multitrack}
{\textsc{Multitrack mix.}}
Turn \fbox{\colorbox{OliveGreen}{\textsf{\tiny \color{white}On}}} the interval tracks
of thirds, both minor and major thirds (that is, when $m = 3$ and $m=4$), with
both $j=1$ and the $\geq$ case for both tracks, with all the sliders $\gg 0$ on both track too,
and start the MCMC with $\beta \gg 0$. As a result we obtain the 282 integer compositions of 12 
with at least one 3 and at least one 4 because by default both tracks are multiplicative.
Now make both tracks additive by clicking each in its own area where none of its buttons
sliders are present (the background of the track will turn dark red). We will see the
random configurations of scales that are accepted whenever the following two conditions are satisfied:
\begin{enumerate}
\small
\item
\label{ex:case1}
if a site $s\in\mathcal C_n$ has no $3$ or no $4$, then it is in the \textsf{Off} state;
\item
\label{ex:case2}
if a site $s\in\mathcal C_n$ has both at least one $3$ and at least one $4$, then it is in the \textsf{On} state.
\end{enumerate}
We already had access to the 282 scales that satisfy condition \ref{ex:case2}, but
we could get them back again by pressing the 
\fbox{\colorbox{subspace}{\tiny \color{white}\textsf{Subspace}}} button
(several times if necessary).
Press the
\fbox{\colorbox{subspace}{\tiny \color{white}\textsf{Release Subspace}}}
button in case you pressed the
\fbox{\colorbox{subspace}{\tiny \color{white}\textsf{Subspace}}}
button to come back to random configurations of scales that satisfy both conditions
\ref{ex:case1} and \ref{ex:case2}.
To get the 1520 scales that satisfy \ref{ex:case1}, press the
\fbox{\colorbox{freeze}{\textsf{\tiny \color{white}Freeze}}} button
(several times if necessary). Press the 
\fbox{\colorbox{freeze}{\textsf{\tiny \color{white}Release Freeze}}} button
to go back again to random configurations of scales that satisfy both  conditions
\ref{ex:case1} and \ref{ex:case2}.
Turn the
modes track with $j=1$ and the $\leq$ case, it imposes on the following condition
on random configurations of scales:
\begin{enumerate}
\small
\setcounter{enumi}{2}
\item
\label{ex:case3}
a site $s\in\mathcal C_n$ can be in the \textsf{On} state only when all the
other members of its modes class are in the \textsf{Off} state.
\end{enumerate}
By default the modes track is multiplicative (green background)
and at the current moment the interval track controlling the occurrences
of thirds should be both additive (dark red backgrounds). Hence, we get random configurations
of scales that satisfy conditions \ref{ex:case1}, \ref{ex:case2} and \ref{ex:case3}.
Press back and forth the
\fbox{\colorbox{freeze}{\textsf{\tiny \color{white}Freeze}}} and
\fbox{\colorbox{freeze}{\textsf{\tiny \color{white}Release Freeze}}} 
buttons (remember to wait a few seconds after pressing the former button)
and get, in the terminology of \cite{Gomez21}, transversals to the set
of scales that satisfy condition \ref{ex:case1}, and observe that
the transversal dimension of this set is 259. To obtain transversals to
the set that satisfies condition \ref{ex:case2}, make the modes track
additive and the interval tracks mutiplicative, and press the
\fbox{\colorbox{reset}{\textsf{\tiny \color{white}Reset}}} button
several times to produce transversals to this set of transversal dimension 52.
\end{example}

\section{Conclusions and related work}
\label{sec:conclusion}

The paradigm of using statistical mechanics as a model to
describe emerging order in disordered systems has
been successfully exported to fields like neurosciences, economics,
information theory and machine learning, to mention just a few
(in social networks for example, the Ising model has been adapted
to study phenomena like group polarization, echo chamber and cocoon
effects \cite{DaiZhuWang22, NiGuidiMichienziZhu23, ZhuNiWangLi21}).
Music is not the exception, e.g. Euler's Tonnetz have been used
to exhibit how ordered patterns emerge in music \cite{Berezovsky19}
(in this regard, generalizations of Tonnetz are very natural grounds for further exploration,
consider for example \cite{Mohanty22, Yust20}; see also \cite{Aucouturier08}
where self-organization is put in the context of musical tuning systems).
In particular, the Ising model has been adapted in musical contexts like sonification
\cite{Campo2007, ClementeCrippaJansenTuysuz22} and
restricted Boltzmann machines have been used to study musical sequences
\cite{LattnerMaartenAgresCancino15}.

\emph{Scaletor} is not only a complete catalog of musical scales in any given
tonality, its scope encompasses educational purpose for it constitutes
an interactive tool that serves to understand the basic principles underlying
these type of models of thermodynamic formalism that are certainly
demanding\footnote{See the foreword to the first edition of \cite{Ruelle04}.},
at least when someone is first exposed to the subject. We find this kind of
approach in other works, e.g. software applications have been developed in
quantum computer music with educational purposes in particular (see \cite{Weaver22}
as part of the whole volume \cite{Miranda2007}).
Here we have effectively adapted the scenario of statistical
mechanics to the context of classification of musical scales, which
are central objects in music theory and still constitute
a subject of current research (see e.g. \cite{Nuno21, Elliot22},
also \cite{HunterHipper03, Hook07, ClampittNoll18, Zheng23}).
The outputs can provide guidelines that musicians can use for
compositions, methods of study, and so forth. Furthermore,
we have seen that the network structures implemented in \emph{Scaletor}
allow us to find kernels in (directed) graphs such as the hypercube
(see \cite{Galvin11, Galvin12, Jenssen20, Jenssen22}, also
\cite{GalvinTetali06, Restrepo14} where independent sets are considered together
with Glauber dynamics; also, for more on networks in music see e.g.
\cite{GrantKnightsPadillaTidhar22, PopoffYust22}).

\emph{Scaletor} comes with a handful set of interaction tracks,
many of which are related to the combinatorics of integer compositions
(for more on combinatorial problems in the theory of music see e.g. see \cite{Read97, Peck21}),
but we have also included balance which is a geometric property, and
certainly there is room for further exploration in this regard (see e.g.
\cite{Tymoczko06, MathiasAlmada21}). We have seen just a few basic examples but
the possibilities are vast and more elaborate mixes require careful interpretations.
Furthermore, although here we have
focused solely on musical scales, the underlying structure applies 
just as well to rhythm structures. \emph{Scaletor} has been designed to
escalate, adding new tracks is a relatively simple process and the code
has been conceptually written in a way that the condition
$n=12$ is not a restriction, so a project that incorporates both scales and
rhythms together with more and multiple musical interactions can be developed,
even with with MIDI capabilities.

\section{Additional material}
\label{sec:additional}

\begin{itemize}
\item
\emph{Scaletor} is available for download from its repository at GitHub: \medskip

\centerline{\texttt{https://github.com/gomiza/scaletor}}

\medskip
The README file contains information about installation, launch instructions and additional material like video tutorials.
\end{itemize}

\section*{Declarations}

{\bf Funding.} This work was supported by DGAPA-PAPIIT grants IN107718, IN110221,
IN112725.

\bigskip

\noindent
{\bf Conflict of interest.} No potential conflict of interest was reported by the author.

\bigskip

\noindent
{\bf Acknowledgments.} Our initial motivation arouse while the author
was working on other projects from thermodynamic formalism that were inspiring:
we are thankful to Brian Marcus, Siamak Taati,
Sebastián Barbieri and Tom Meyerovitch for multiple discussions.
Special thanks to the musical ensamble \emph{Los Simb\'olicos Din\'amicos}
that has composed, arranged and performed musical pieces based
on this and other works that explore the relation of mathematics and music,
their members have included Dante Baz\'ua, 
Aldo Max, Gustavo Rivera and Jos\'e Luis Vaca.


\bibliography{RFS}


\begin{thebibliography}{45}
\ifx \bisbn   \undefined \def \bisbn  #1{ISBN #1}\fi
\ifx \binits  \undefined \def \binits#1{#1}\fi
\ifx \bauthor  \undefined \def \bauthor#1{#1}\fi
\ifx \batitle  \undefined \def \batitle#1{#1}\fi
\ifx \bjtitle  \undefined \def \bjtitle#1{#1}\fi
\ifx \bvolume  \undefined \def \bvolume#1{\textbf{#1}}\fi
\ifx \byear  \undefined \def \byear#1{#1}\fi
\ifx \bissue  \undefined \def \bissue#1{#1}\fi
\ifx \bfpage  \undefined \def \bfpage#1{#1}\fi
\ifx \blpage  \undefined \def \blpage #1{#1}\fi
\ifx \burl  \undefined \def \burl#1{\textsf{#1}}\fi
\ifx \doiurl  \undefined \def \doiurl#1{\url{https://doi.org/#1}}\fi
\ifx \betal  \undefined \def \betal{\textit{et al.}}\fi
\ifx \binstitute  \undefined \def \binstitute#1{#1}\fi
\ifx \binstitutionaled  \undefined \def \binstitutionaled#1{#1}\fi
\ifx \bctitle  \undefined \def \bctitle#1{#1}\fi
\ifx \beditor  \undefined \def \beditor#1{#1}\fi
\ifx \bpublisher  \undefined \def \bpublisher#1{#1}\fi
\ifx \bbtitle  \undefined \def \bbtitle#1{#1}\fi
\ifx \bedition  \undefined \def \bedition#1{#1}\fi
\ifx \bseriesno  \undefined \def \bseriesno#1{#1}\fi
\ifx \blocation  \undefined \def \blocation#1{#1}\fi
\ifx \bsertitle  \undefined \def \bsertitle#1{#1}\fi
\ifx \bsnm \undefined \def \bsnm#1{#1}\fi
\ifx \bsuffix \undefined \def \bsuffix#1{#1}\fi
\ifx \bparticle \undefined \def \bparticle#1{#1}\fi
\ifx \barticle \undefined \def \barticle#1{#1}\fi
\bibcommenthead
\ifx \bconfdate \undefined \def \bconfdate #1{#1}\fi
\ifx \botherref \undefined \def \botherref #1{#1}\fi
\ifx \url \undefined \def \url#1{\textsf{#1}}\fi
\ifx \bchapter \undefined \def \bchapter#1{#1}\fi
\ifx \bbook \undefined \def \bbook#1{#1}\fi
\ifx \bcomment \undefined \def \bcomment#1{#1}\fi
\ifx \oauthor \undefined \def \oauthor#1{#1}\fi
\ifx \citeauthoryear \undefined \def \citeauthoryear#1{#1}\fi
\ifx \endbibitem  \undefined \def \endbibitem {}\fi
\ifx \bconflocation  \undefined \def \bconflocation#1{#1}\fi
\ifx \arxivurl  \undefined \def \arxivurl#1{\textsf{#1}}\fi
\csname PreBibitemsHook\endcsname

\bibitem{Carey17}
\begin{barticle}
\bauthor{\bsnm{Carey}, \binits{N.}}:
\batitle{Perfect balance and circularly rich words}.
\bjtitle{J. Math. Music}
\bvolume{11}(\bissue{2-3}),
\bfpage{134}--\blpage{154}
(\byear{2017}).
\doiurl{10.1080/17459737.2018.1450458}
\end{barticle}
\endbibitem

\bibitem{MilneBulgerHerff17}
\begin{barticle}
\bauthor{\bsnm{Milne}, \binits{A.J.}},
\bauthor{\bsnm{Bulger}, \binits{D.}},
\bauthor{\bsnm{Herff}, \binits{S.A.}}:
\batitle{Exploring the space of perfectly balanced rhythms and scales}.
\bjtitle{J. Math. Music}
\bvolume{11}(\bissue{2-3}),
\bfpage{101}--\blpage{133}
(\byear{2017}).
\doiurl{10.1080/17459737.2017.1395915}
\end{barticle}
\endbibitem

\bibitem{WebP}
\begin{botherref}
\oauthor{\bsnm{Processing-Foundation}}:
Processing.
Accessed: 2025-05-27.
\url{https://processing.org/}
\end{botherref}
\endbibitem

\bibitem{UESMDecks20}
\begin{botherref}
The Universal Encyclopedia of Scales by mDecks Music.
\url{https://mdecks.com/theuniversalencyclopediaofscales.phtml}.
Accessed: 2025-02-20
\end{botherref}
\endbibitem

\bibitem{Georgii88}
\begin{bbook}
\bauthor{\bsnm{Georgii}, \binits{H.-O.}}:
\bbtitle{Gibbs Measures and Phase Transitions}.
\bsertitle{De Gruyter Studies in Mathematics},
vol. \bseriesno{9},
p. \bfpage{525}.
\bpublisher{Walter de Gruyter \& Co., Berlin}, \blocation{???}
(\byear{1988}).
\doiurl{10.1515/9783110850147}.
\burl{https://doi.org/10.1515/9783110850147}
\end{bbook}
\endbibitem

\bibitem{Ruelle04}
\begin{bbook}
\bauthor{\bsnm{Ruelle}, \binits{D.}}:
\bbtitle{Thermodynamic Formalism},
\bedition{2}nd edn.
\bsertitle{Cambridge Mathematical Library},
p. \bfpage{174}.
\bpublisher{Cambridge University Press, Cambridge}, \blocation{???}
(\byear{2004}).
\doiurl{10.1017/CBO9780511617546}.
\bcomment{The mathematical structures of equilibrium statistical mechanics}.
\burl{https://doi.org/10.1017/CBO9780511617546}
\end{bbook}
\endbibitem

\bibitem{Sherrington75}
\begin{barticle}
\bauthor{\bsnm{Sherrington}, \binits{D.}},
\bauthor{\bsnm{Kirkpatrick}, \binits{S.}}:
\batitle{Solvable model of a spin-glass}.
\bjtitle{Phys. Rev. Lett.}
\bvolume{35},
\bfpage{1792}--\blpage{1796}
(\byear{1975}).
\doiurl{10.1103/PhysRevLett.35.1792}
\end{barticle}
\endbibitem

\bibitem{Edwards75}
\begin{barticle}
\bauthor{\bsnm{Edwards}, \binits{S.F.}},
\bauthor{\bsnm{Anderson}, \binits{P.W.}}:
\batitle{Theory of spin glasses}.
\bjtitle{Journal of Physics F: Metal Physics}
\bvolume{5}(\bissue{5}),
\bfpage{965}
(\byear{1975}).
\doiurl{10.1088/0305-4608/5/5/017}
\end{barticle}
\endbibitem

\bibitem{Chatterjee23}
\begin{botherref}
\oauthor{\bsnm{Chatterjee}, \binits{S.}}:
Spin glass phase at zero temperature in the edwards-anderson model.
arXiv:2301.04112v4,
1--27
(2023)
\end{botherref}
\endbibitem

\bibitem{BGMT20}
\begin{barticle}
\bauthor{\bsnm{Barbieri}, \binits{S.}},
\bauthor{\bsnm{G\'{o}mez}, \binits{R.}},
\bauthor{\bsnm{Marcus}, \binits{B.}},
\bauthor{\bsnm{Taati}, \binits{S.}}:
\batitle{Equivalence of relative {G}ibbs and relative equilibrium measures for
  actions of countable amenable groups}.
\bjtitle{Nonlinearity}
\bvolume{33}(\bissue{5}),
\bfpage{2409}--\blpage{2454}
(\byear{2020}).
\doiurl{10.1088/1361-6544/ab6a75}
\end{barticle}
\endbibitem

\bibitem{BGMMT21}
\begin{barticle}
\bauthor{\bsnm{Barbieri}, \binits{S.}},
\bauthor{\bsnm{G\'{o}mez}, \binits{R.}},
\bauthor{\bsnm{Marcus}, \binits{B.}},
\bauthor{\bsnm{Meyerovitch}, \binits{T.}},
\bauthor{\bsnm{Taati}, \binits{S.}}:
\batitle{Gibbsian representations of continuous specifications: the theorems of
  {K}ozlov and {S}ullivan revisited}.
\bjtitle{Comm. Math. Phys.}
\bvolume{382}(\bissue{2}),
\bfpage{1111}--\blpage{1164}
(\byear{2021}).
\doiurl{10.1007/s00220-021-03979-2}
\end{barticle}
\endbibitem

\bibitem{Gomez23}
\begin{barticle}
\bauthor{\bsnm{G\'{o}mez~A\'{\i}za}, \binits{R.}}:
\batitle{Symbolic dynamical scales: modes, orbitals, and transversals}.
\bjtitle{J. Math. Music}
\bvolume{17}(\bissue{1}),
\bfpage{46}--\blpage{64}
(\byear{2023}).
\doiurl{10.1080/17459737.2021.1953169}
\end{barticle}
\endbibitem

\bibitem{Gomez21}
\begin{bchapter}
\bauthor{\bsnm{G{\'o}mez~A{\'i}za}, \binits{R.}}:
\bctitle{Networks of symbolic dynamical scales}.
In: \beditor{\bsnm{Khannanov}, \binits{I.D.}},
\beditor{\bsnm{Ruditsa}, \binits{R.}} (eds.)
\bbtitle{Proceedings of the Worldwide Music Conference 2021},
pp. \bfpage{94}--\blpage{105}.
\bpublisher{Springer},
\blocation{Cham}
(\byear{2021})
\end{bchapter}
\endbibitem

\bibitem{GomezNasser21}
\begin{barticle}
\bauthor{\bsnm{G\'{o}mez}, \binits{R.}},
\bauthor{\bsnm{Nasser}, \binits{L.}}:
\batitle{Symbolic structures in music theory and composition, binary keyboards,
  and the {T}hue-{M}orse shift}.
\bjtitle{J. Math. Music}
\bvolume{15}(\bissue{3}),
\bfpage{247}--\blpage{266}
(\byear{2021}).
\doiurl{10.1080/17459737.2020.1732490}
\end{barticle}
\endbibitem

\bibitem{BerlinerCastroMerrittSouthard18}
\begin{barticle}
\bauthor{\bsnm{Berliner}, \binits{A.H.}},
\bauthor{\bsnm{Castro}, \binits{D.}},
\bauthor{\bsnm{Merritt}, \binits{J.}},
\bauthor{\bsnm{Southard}, \binits{C.}}:
\batitle{Expanded interval cycles}.
\bjtitle{J. Math. Music}
\bvolume{12}(\bissue{1}),
\bfpage{21}--\blpage{33}
(\byear{2018}).
\doiurl{10.1080/17459737.2018.1453950}
\end{barticle}
\endbibitem

\bibitem{DaiZhuWang22}
\begin{barticle}
\bauthor{\bsnm{Dai}, \binits{J.}},
\bauthor{\bsnm{Zhu}, \binits{J.}},
\bauthor{\bsnm{Wang}, \binits{G.}}:
\batitle{Opinion influence maximization problem in online social networks based
  on group polarization effect}.
\bjtitle{Information Sciences}
\bvolume{609},
\bfpage{195}--\blpage{214}
(\byear{2022}).
\doiurl{10.1016/j.ins.2022.07.086}
\end{barticle}
\endbibitem

\bibitem{NiGuidiMichienziZhu23}
\begin{barticle}
\bauthor{\bsnm{Ni}, \binits{P.}},
\bauthor{\bsnm{Guidi}, \binits{B.}},
\bauthor{\bsnm{Michienzi}, \binits{A.}},
\bauthor{\bsnm{Zhu}, \binits{J.}}:
\batitle{Equilibrium of individual concern-critical influence maximization in
  virtual and real blending network}.
\bjtitle{Information Sciences}
\bvolume{648},
\bfpage{119646}
(\byear{2023}).
\doiurl{10.1016/j.ins.2023.119646}
\end{barticle}
\endbibitem

\bibitem{ZhuNiWangLi21}
\begin{barticle}
\bauthor{\bsnm{Zhu}, \binits{J.}},
\bauthor{\bsnm{Ni}, \binits{P.}},
\bauthor{\bsnm{Wang}, \binits{G.}},
\bauthor{\bsnm{Li}, \binits{Y.}}:
\batitle{Misinformation influence minimization problem based on group disbanded
  in social networks}.
\bjtitle{Information Sciences}
\bvolume{572},
\bfpage{1}--\blpage{15}
(\byear{2021}).
\doiurl{10.1016/j.ins.2021.04.086}
\end{barticle}
\endbibitem

\bibitem{Berezovsky19}
\begin{barticle}
\bauthor{\bsnm{Berezovsky}, \binits{J.}}:
\batitle{The structure of musical harmony as an ordered phase of sound: A
  statistical mechanics approach to music theory}.
\bjtitle{Science Advances}
\bvolume{5},
\bfpage{1}--\blpage{8}
(\byear{2019})
\end{barticle}
\endbibitem

\bibitem{Mohanty22}
\begin{barticle}
\bauthor{\bsnm{Mohanty}, \binits{V.}}:
\batitle{A 5-dimensional {\it {t}onnetz} for nearly symmetric hexachords}.
\bjtitle{J. Math. Music}
\bvolume{16}(\bissue{1}),
\bfpage{121}--\blpage{131}
(\byear{2022}).
\doiurl{10.1080/17459737.2020.1799087}
\end{barticle}
\endbibitem

\bibitem{Yust20}
\begin{barticle}
\bauthor{\bsnm{Yust}, \binits{J.}}:
\batitle{Generalized {\it {t}onnetze} and {\it {z}eitnetze}, and the topology
  of music concepts}.
\bjtitle{J. Math. Music}
\bvolume{14}(\bissue{2}),
\bfpage{170}--\blpage{203}
(\byear{2020}).
\doiurl{10.1080/17459737.2020.1725667}
\end{barticle}
\endbibitem

\bibitem{Aucouturier08}
\begin{barticle}
\bauthor{\bsnm{Aucouturier}, \binits{J.-J.}}:
\batitle{The hypothesis of self-organization for musical tuning systems}.
\bjtitle{Leonardo Music Journal}
\bvolume{18},
\bfpage{63}--\blpage{69}
(\byear{2008})
\end{barticle}
\endbibitem

\bibitem{Campo2007}
\begin{bbook}
\bauthor{\bsnm{Vogt}, \binits{K.}},
\bauthor{\bsnm{Plessas}, \binits{W.}},
\bauthor{\bparticle{de} \bsnm{Campo}, \binits{A.}},
\bauthor{\bsnm{Frauenberger}, \binits{C.}},
\bauthor{\bsnm{Eckel}, \binits{G.}}:
\bbtitle{Sonification of Spin Models. Listening to Phase Transitions in the
  Ising and Potts Model}
(\byear{2007})
\end{bbook}
\endbibitem

\bibitem{ClementeCrippaJansenTuysuz22}
\begin{bbook}
\bauthor{\bsnm{Clemente}, \binits{G.}},
\bauthor{\bsnm{Crippa}, \binits{A.}},
\bauthor{\bsnm{Jansen}, \binits{K.}},
\bauthor{\bsnm{T{\"u}ys{\"u}z}, \binits{C.}}:
In: \beditor{\bsnm{Miranda}, \binits{E.R.}} (ed.)
\bbtitle{New Directions in Quantum Music: Concepts for a Quantum Keyboard and
  the Sound of the Ising Model},
pp. \bfpage{433}--\blpage{445}.
\bpublisher{Springer},
\blocation{Cham}
(\byear{2022}).
\doiurl{10.1007/978-3-031-13909-3_17}.
\burl{https://doi.org/10.1007/978-3-031-13909-3_17}
\end{bbook}
\endbibitem

\bibitem{LattnerMaartenAgresCancino15}
\begin{bchapter}
\bauthor{\bsnm{Lattner}, \binits{S.}},
\bauthor{\bsnm{Grachten}, \binits{M.}},
\bauthor{\bsnm{Agres}, \binits{K.}},
\bauthor{\bsnm{Cancino~Chac\'{o}n}, \binits{C.E.}}:
\bctitle{Probabilistic segmentation of musical sequences using restricted
  {B}oltzmann machines}.
In: \bbtitle{Mathematics and Computation in Music}.
\bsertitle{Lecture Notes in Comput. Sci.},
vol. \bseriesno{9110},
pp. \bfpage{323}--\blpage{334}.
\bpublisher{Springer}, \blocation{???}
(\byear{2015}).
\doiurl{10.1007/978-3-319-20603-5\_33}.
\burl{https://doi.org/10.1007/978-3-319-20603-5_33}
\end{bchapter}
\endbibitem

\bibitem{Weaver22}
\begin{bbook}
\bauthor{\bsnm{Weaver}, \binits{J.L.}}:
In: \beditor{\bsnm{Miranda}, \binits{E.R.}} (ed.)
\bbtitle{Quantum Music Playground Tutorial},
pp. \bfpage{197}--\blpage{222}.
\bpublisher{Springer},
\blocation{Cham}
(\byear{2022})
\end{bbook}
\endbibitem

\bibitem{Miranda2007}
\begin{bbook}
\beditor{\bsnm{Miranda}, \binits{E.}} (ed.):
\bbtitle{Quantum Computer Music: Foundations, Methods and Advanced Concepts}.
\bpublisher{Springer},
\blocation{Switzerland}
(\byear{2022}).
\doiurl{10.1007/978-3-031-13909-3}.
\bcomment{Publisher Copyright: {\textcopyright} The Editor(s) (if applicable)
  and The Author(s), under exclusive license to Springer Nature Switzerland AG
  2022.}
\end{bbook}
\endbibitem

\bibitem{Nuno21}
\begin{barticle}
\bauthor{\bsnm{Nu\~{n}o}, \binits{L.}}:
\batitle{A detailed list and a periodic table of set classes}.
\bjtitle{J. Math. Music}
\bvolume{15}(\bissue{3}),
\bfpage{267}--\blpage{287}
(\byear{2021}).
\doiurl{10.1080/17459737.2020.1775902}
\end{barticle}
\endbibitem

\bibitem{Elliot22}
\begin{barticle}
\bauthor{\bsnm{Elliott}, \binits{J.}}:
\batitle{Group actions, power mean orbit size, and musical scales}.
\bjtitle{J. Math. Music}
\bvolume{16}(\bissue{1}),
\bfpage{97}--\blpage{120}
(\byear{2022}).
\doiurl{10.1080/17459737.2020.1836686}
\end{barticle}
\endbibitem

\bibitem{HunterHipper03}
\begin{barticle}
\bauthor{\bsnm{Hunter}, \binits{D.J.}},
\bauthor{\bparticle{von} \bsnm{Hippel}, \binits{P.T.}}:
\batitle{How rare is symmetry in musical 12-tone rows?}
\bjtitle{Amer. Math. Monthly}
\bvolume{110}(\bissue{2}),
\bfpage{124}--\blpage{132}
(\byear{2003}).
\doiurl{10.2307/3647771}
\end{barticle}
\endbibitem

\bibitem{Hook07}
\begin{botherref}
\oauthor{\bsnm{Hook}, \binits{J.}}:
Why are there twenty-nine tetrachords? a tutorial on combinatorics and
  enumeration in music theory.
Music theory online
\textbf{13}
(2007)
\end{botherref}
\endbibitem

\bibitem{ClampittNoll18}
\begin{barticle}
\bauthor{\bsnm{Clampitt}, \binits{D.}},
\bauthor{\bsnm{Noll}, \binits{T.}}:
\batitle{Naming and ordering the modes, in light of combinatorics on words}.
\bjtitle{J. Math. Music}
\bvolume{12}(\bissue{3}),
\bfpage{134}--\blpage{153}
(\byear{2018})
\end{barticle}
\endbibitem

\bibitem{Zheng23}
\begin{bbook}
\bauthor{\bsnm{Zheng}, \binits{L.}},
\bauthor{\bsnm{Mazzola}, \binits{G.}}:
\bbtitle{Classification of Musical Objects for Analysis and Composition}.
\bsertitle{Computational Music Science},
p. \bfpage{160}.
\bpublisher{Springer}, \blocation{???}
(\byear{[2023] \copyright 2023}).
\doiurl{10.1007/978-3-031-30183-4}.
\burl{https://doi.org/10.1007/978-3-031-30183-4}
\end{bbook}
\endbibitem

\bibitem{Galvin11}
\begin{barticle}
\bauthor{\bsnm{Galvin}, \binits{D.}}:
\batitle{A threshold phenomenon for random independent sets in the discrete
  hypercube}.
\bjtitle{Combin. Probab. Comput.}
\bvolume{20}(\bissue{1}),
\bfpage{27}--\blpage{51}
(\byear{2011}).
\doiurl{10.1017/S0963548310000155}
\end{barticle}
\endbibitem

\bibitem{Galvin12}
\begin{barticle}
\bauthor{\bsnm{Galvin}, \binits{D.}}:
\batitle{The independent set sequence of regular bipartite graphs}.
\bjtitle{Discrete Math.}
\bvolume{312}(\bissue{19}),
\bfpage{2881}--\blpage{2892}
(\byear{2012}).
\doiurl{10.1016/j.disc.2012.06.011}
\end{barticle}
\endbibitem

\bibitem{Jenssen20}
\begin{barticle}
\bauthor{\bsnm{Jenssen}, \binits{M.}},
\bauthor{\bsnm{Perkins}, \binits{W.}}:
\batitle{Independent sets in the hypercube revisited}.
\bjtitle{J. Lond. Math. Soc. (2)}
\bvolume{102}(\bissue{2}),
\bfpage{645}--\blpage{669}
(\byear{2020}).
\doiurl{10.1112/jlms.12331}
\end{barticle}
\endbibitem

\bibitem{Jenssen22}
\begin{barticle}
\bauthor{\bsnm{Jenssen}, \binits{M.}},
\bauthor{\bsnm{Perkins}, \binits{W.}},
\bauthor{\bsnm{Potukuchi}, \binits{A.}}:
\batitle{Independent sets of a given size and structure in the hypercube}.
\bjtitle{Combin. Probab. Comput.}
\bvolume{31}(\bissue{4}),
\bfpage{702}--\blpage{720}
(\byear{2022}).
\doiurl{10.1017/s0963548321000559}
\end{barticle}
\endbibitem

\bibitem{GalvinTetali06}
\begin{barticle}
\bauthor{\bsnm{Galvin}, \binits{D.}},
\bauthor{\bsnm{Tetali}, \binits{P.}}:
\batitle{Slow mixing of {G}lauber dynamics for the hard-core model on regular
  bipartite graphs}.
\bjtitle{Random Structures Algorithms}
\bvolume{28}(\bissue{4}),
\bfpage{427}--\blpage{443}
(\byear{2006}).
\doiurl{10.1002/rsa.20094}
\end{barticle}
\endbibitem

\bibitem{Restrepo14}
\begin{barticle}
\bauthor{\bsnm{Restrepo}, \binits{R.}},
\bauthor{\bparticle{\v} \bsnm{Stefankovi\v~c}, \binits{D.}},
\bauthor{\bsnm{Vera}, \binits{J.C.}},
\bauthor{\bsnm{Vigoda}, \binits{E.}},
\bauthor{\bsnm{Yang}, \binits{L.}}:
\batitle{Phase transition for {G}lauber dynamics for independent sets on
  regular trees}.
\bjtitle{SIAM J. Discrete Math.}
\bvolume{28}(\bissue{2}),
\bfpage{835}--\blpage{861}
(\byear{2014}).
\doiurl{10.1137/120885498}
\end{barticle}
\endbibitem

\bibitem{GrantKnightsPadillaTidhar22}
\begin{barticle}
\bauthor{\bsnm{Grant}, \binits{B.}},
\bauthor{\bsnm{Knights}, \binits{F.}},
\bauthor{\bsnm{Padilla}, \binits{P.}},
\bauthor{\bsnm{Tidhar}, \binits{D.}}:
\batitle{Network-theoretic analysis and the exploration of stylistic
  development in {H}aydn's string quartets}.
\bjtitle{J. Math. Music}
\bvolume{16}(\bissue{1}),
\bfpage{18}--\blpage{28}
(\byear{2022}).
\doiurl{10.1080/17459737.2020.1825844}
\end{barticle}
\endbibitem

\bibitem{PopoffYust22}
\begin{barticle}
\bauthor{\bsnm{Popoff}, \binits{A.}},
\bauthor{\bsnm{Yust}, \binits{J.}}:
\batitle{Meter networks: a categorical framework for metrical analysis}.
\bjtitle{J. Math. Music}
\bvolume{16}(\bissue{1}),
\bfpage{29}--\blpage{50}
(\byear{2022}).
\doiurl{10.1080/17459737.2020.1836687}
\end{barticle}
\endbibitem

\bibitem{Read97}
\begin{botherref}
\oauthor{\bsnm{Read}, \binits{R.C.}}:
Combinatorial problems in the theory of music.
vol. 167/168,
pp. 543--551
(1997).
\doiurl{10.1016/S0012-365X(96)00255-5}.
15th British Combinatorial Conference (Stirling, 1995).
\url{https://doi.org/10.1016/S0012-365X(96)00255-5}
\end{botherref}
\endbibitem

\bibitem{Peck21}
\begin{barticle}
\bauthor{\bsnm{Peck}, \binits{R.W.}}:
\batitle{Beat-class set classes and the power group enumeration theorem}.
\bjtitle{MusMat Brazil. J. Music Math.}
\bvolume{5}(\bissue{1}),
\bfpage{1}--\blpage{13}
(\byear{2021}).
\doiurl{10.46926/musmat.2021v5n1.01-13}
\end{barticle}
\endbibitem

\bibitem{Tymoczko06}
\begin{barticle}
\bauthor{\bsnm{Tymoczko}, \binits{D.}}:
\batitle{The geometry of musical chords}.
\bjtitle{Science}
\bvolume{313}(\bissue{5783}),
\bfpage{72}--\blpage{74}
(\byear{2006}).
\doiurl{10.1126/science.1126287}
\end{barticle}
\endbibitem

\bibitem{MathiasAlmada21}
\begin{barticle}
\bauthor{\bsnm{Mathias}, \binits{C.}},
\bauthor{\bsnm{Almada}, \binits{C.}}:
\batitle{Prime decomposition encoding: an analytical tool by the use of
  arithmetic mapping of drum-set timelines}.
\bjtitle{MusMat Brazil. J. Music Math.}
\bvolume{5}(\bissue{2}),
\bfpage{41}--\blpage{64}
(\byear{2021}).
\doiurl{10.46926/musmat.2021v5n2.41-64}
\end{barticle}
\endbibitem

\end{thebibliography}


\end{document}